\providecommand{\keywords}[1]
{
  \small	
  \textbf{\textit{Keywords---}} #1
}
\newtheorem{theorem}{Theorem}
\newtheorem{lemma}[theorem]{Lemma}
\newtheorem{assumption}{Assumption}
\begin{document}
\title{Kernel Assisted Learning for Personalized Dose Finding}

%%
%% The "author" command and its associated commands are used to define
%% the authors and their affiliations.
%% Of note is the shared affiliation of the first two authors, and the
%% "authornote" and "authornotemark" commands
%% used to denote shared contribution to the research.

%\orcid{1234-5678-9012}
%\authornotemark[1]
%\email{webmaster@marysville-ohio.com}
\author{Liangyu Zhu$^1$\footnote{Email: lzhu12@ncsu.edu},
        Wenbin Lu$^1$, 
        Michael R. Kosorok$^2$,
        Rui Song$^1$\\
        $^1$Department of Statistics, 
        North Carolina State University\\
        $^2$Department of Biostatistics and Department of Statistics, \\University of North Carolina at Chapel Hill}
\date{}

\maketitle
%%
%% Submission ID.
%% Use this when submitting an article to a sponsored event. You'll
%% receive a unique submission ID from the organizers
%% of the event, and this ID should be used as the parameter to this command.
%%\acmSubmissionID{123-A56-BU3}
\begin{abstract}
  An individualized dose rule 
  recommends a dose level within a continuous safe dose range based on patient level information such as physical conditions, genetic factors and medication histories. 
  Traditionally, personalized dose finding process requires repeating clinical visits of the patient and frequent adjustments of the dosage. Thus the patient is constantly exposed to the risk of underdosing and overdosing during the process.
  Statistical methods for finding an optimal individualized dose rule can lower the costs and risks for patients. In this article, we propose a kernel assisted learning method for estimating the optimal individualized dose rule. The proposed methodology can also be applied to all other continuous decision-making problems. Advantages of the proposed method include robustness to model misspecification and capability of providing statistical inference for the estimated parameters. In the simulation studies, we show that this method is capable of identifying the optimal individualized dose rule and produces favorable expected outcomes in the population. Finally, we illustrate our approach using data from a warfarin dosing study for thrombosis patients.   
\end{abstract}
%%
%% The majority of ACM publications use numbered citations and
%% references.  The command \citestyle{authoryear} switches to the
%% "author year" style.
%%
%% If you are preparing content for an event
%% sponsored by ACM SIGGRAPH, you must use the "author year" style of
%% citations and references.
%% Uncommenting
%% the next command will enable that style.
%%\citestyle{acmauthoryear}

%\begin{CCSXML}
%<ccs2012>
%<concept>
%<concept_id>10010147.10010341.10010342</concept_id>
%<concept_desc>Computing methodologies~Model development and analysis</concept_desc>
%<concept_significance>500</concept_significance>
%</concept>
%<concept>
%<concept_id>10010147.10010257.10010293</concept_id>
%<concept_desc>Computing methodologies~Machine learning approaches</concept_desc>
%<concept_significance>300</concept_significance>
%</concept>
%<concept>
%<concept_id>10002950.10003648.10003662</concept_id>
%<concept_desc>Mathematics of computing~Probabilistic inference problems</concept_desc>
%<concept_significance>100</concept_significance>
%</concept>
%</ccs2012>
%\end{CCSXML}

%\ccsdesc[500]{Computing methodologies~Model development and analysis}
%\ccsdesc[300]{Computing methodologies~Machine learning approaches}
%\ccsdesc[100]{Mathematics of computing~Probabilistic inference problems}
%%

\keywords{Individualized dose rules, Kernel estimation, Personalized medicine, Value function.}
%%
%% The "title" command has an optional parameter,
%% allowing the author to define a "short title" to be used in page headers.

%%
%% The abstract is a short summary of the work to be presented in the
%% article.

%%
%% The code below is generated by the tool at http://dl.acm.org/ccs.cfm.
%% Please copy and paste the code instead of the example below.
%%

%% A "teaser" image appears between the author and affiliation
%% information and the body of the document, and typically spans the
%% page.
%\begin{teaserfigure}
%  \includegraphics[width=\textwidth]{sampleteaser}
%  \caption{Seattle Mariners at Spring Training, 2010.}
%  \Description{Enjoying the baseball game from the third-base
% seats. Ichiro Suzuki preparing to bat.}
%  \label{fig:teaser}
%\end{teaserfigure}

%%
%% This command processes the author and affiliation and title
%% information and builds the first part of the formatted document.
\maketitle

%\begin{table*}
%  \caption{Some Typical Commands}
%  \label{tab:commands}
%  \begin{tabular}{ccl}
%    \toprule
%    Command &A Number & Comments\\
%    \midrule
%   \texttt{{\char'134}author} & 100& Author \\
%    \texttt{{\char'134}table}& 300 & For tables\\
%    \texttt{{\char'134}table*}& 400& For wider tables\\
%    \bottomrule
%  \end{tabular}
%\end{table*}

% an unnumbered equation:
%\begin{displaymath}
%  \sum_{i=0}^{\infty} x + 1
%\end{displaymath}

%\begin{figure}[h]
%%  \centering
%  \includegraphics[width=\linewidth]{sample-franklin}
%  \caption{1907 Franklin Model D roadster. Photograph by Harris \&
%%    Ewing, Inc. [Public domain], via Wikimedia
%    Commons. (\url{https://goo.gl/VLCRBB}).}
%  \Description{The 1907 Franklin Model D roadster.}
%\end{figure}

 %Figure captions go below the figure. 
 %Your figures should  also include a description suitable for screen readers, to
%assist the visually-challenged to better understand your work.

%If your work needs an appendix, add it before the
%``\verb|\end{document}|'' command at the conclusion of your source
%document.

%Start the appendix with the ``\verb|appendix|'' command:
%\begin{verbatim}
% \appendix
%\end{verbatim}
%and note that in the appendix, sections are lettered, not
%numbered. This document has two appendices, demonstrating the section
%and subsection identification method.

%%
%% The next two lines define the bibliography style to be used, and
%% the bibliography file.

%------------------------------------
\section{Introduction}
%------------------------------------
\label{s:intro}
Statistical methods are increasingly popular for optimizing drug doses in clinical trials. A typical dose-finding study is conducted by a double-blind randomized trial where each patient is randomly assigned a dose among a few safe dose levels for a candidate drug \citep{chevret2006statistical}. At the end of the trial, a single dose leading to the best average treatment effect is determined as a recommendation for future patients. However, different patients might respond differently to the same dose of a drug due to their differences in physical conditions, genetic factors, environmental factors and medication histories. Taking these differences into consideration when making dose decisions is essential for achieving better treatment results. 

Recently, there has been a growing interest in personalized treatments optimization. However, most of the methods are restricted to a finite number of treatment options. In particular, people are interested in finding individualized treatment rules, which output a treatment option within a finite number of available treatments based on patient level information. Such treatment rules can thus be used to guide treatment decisions aiming to maximize the expected clinical outcome of interest, also known as the expected reward or value. An optimal treatment rule is defined to be the one that maximizes the value function in the population among a class of treatment rules. Various statistical learning methods have been proposed to infer optimal individualized treatment rules using data from randomized trials or observational studies. Existing methods include model-based approaches, such as the Q-learning  \citep{watkins1992q,zhao2009reinforcement,qian2011performance} and A-learning \citep{murphy2003optimal,robins2004optimal,henderson2010regret, liang2018deep, shi2018high}, direct value search methods by maximizing a nonparametric estimator of the value function \citep{zhang2012estimating,zhang2012robust,zhao2012estimating, fan2017concordance, shi2019concordance, zhou2019restricted}, and other semi-parametric methods \citep{song2017semiparametric, kang2018estimation, xiao2019robust}. 

The above methods, however, are not directly applicable when the number of possible treatment levels is large. 
Let us illustrate with warfarin, which is an anticoagulant drug commonly used for the prevention of thrombosis and thromboembolism. Establishing the appropriate dosage for warfarin is known to be a difficult problem because the optimal dosage can vary by a factor of 10 among patients, from 10mg to 100mg per week \citep{international2009estimation}. Incorrect dosages contribute largely to the adverse effects of warfarin usage. Underdosing will fail to alleviate symptoms in patients and overdosing will lead to catastrophic bleeding. In this case, an individualized dose rule, where a dose level is suggested within a continuous safe dose range according to each individual's physical conditions, would be better at tailoring to patient heterogeneity in drug response.  Several methods have been proposed for finding optimal individualized dose rules. One way of extending existing methods to the continuous dose case is to discretize the dose levels. \citet{laber2015tree} proposed a tree-based method and turned the problem into a classification problem by dividing patients into subgroups and assigning a single dose to each subgroup. \citet{chen2018estimating} extended the outcome weighted learning method \citep{zhao2012estimating} from binary treatment settings to ordinal treatment settings. However, in cases where patient responses are sensitive to dose changes, a discretized dose rule with a small number of levels will fail to provide dose recommendations leading to optimal clinical results. On the other hand, a discretized dose rule with a large number of levels may result in limited observations within each subgroup, and thus may be at risk of overfitting.  

Alternatively, \citet{rich2014simulating} extended the Q-learning method by modeling the interactions between the dose level and covariates with both linear and quadratic terms in doses. However, such a parametric approach is sensitive to model misspecification and the estimated individualized dose rule might be far away from the true optimal dose rule. In addition, it cannot be guaranteed that the estimated optimal dose falls into the safe dose range. More recently,  \citet{chen2016personalized} extended the outcome weighted learning method proposed by \citet{zhao2012estimating} and transformed the dose-finding problem into a weighted regression with individual rewards as weights. The optimal dose rule is then obtained by optimizing a non-convex loss function. This method is robust to model misspecification and has appealing computational properties, however, the associated statistical inference for the estimated dose rule is challenging to determine. 
In this article we propose a kernel assisted learning method to infer the optimal individualized dose rule in a manner which enables statistical inference. Our proposed method can be viewed as a direct value search method. Specifically, we first estimate the value function with a kernel based estimator. Then we search for the optimal individualized dose rule within a prespecified class of rules where the suggested doses always lie in the safe dose range. The proposed method is robust to model misspecification and is applicable to data from both randomized trials and observational studies. We establish the consistency and asymptotic normality of the estimated parameters in the obtained optimal dose rule. In particular, the asymptotic covariance of the estimators is derived based on nontrivial calculations of the expectation of a U-statistic. 

The remainder of the article is organized as follows. In Section 2, we present the problem setting and our proposed method. The theoretical results of the estimated parameters are established in Section 3. In Section 4, we demonstrate the empirical performance of the proposed method via simulations. In Section 5, the proposed method is further illustrated with an application to a warfarin study. Some discussions and conclusions are given in Section 6. Proofs of the theoretical results are provided in the appendix.

%-------------------------------------------------------
\section{Method}
%-------------------------------------------------------

\subsection{Problem Setting}
The observed data consist of $n$ independent and identically distributed observations $\{(X_i,$ $A_i,$ $Y_i)\}_{i=1}^n$, where $X_i\in\mathcal X$ is a $d$ dimensional vector of covariates for the $i$th patient, $A_i\in\mathcal A$ is the dose assigned to the patient with $\mathcal A$ being the safe dose range, and $Y_i\in\mathbb R$ is the outcome of interest. Without loss of generality, we assume that larger $Y$ means better outcome. Let $\pi({X})$ denote an individualized dose rule, which is a deterministic mapping function from $\mathcal{X}$ to $\mathcal{A}$. To define the value function of an individualized dose rule, we use the potential outcome framework \citep{rubin1978bayesian}. Specifically, let $Y^*(a)$ be the potential outcome that would be observed when a dose level $a\in \mathcal A$ is given. Define the value function as the expected potential outcome in the population if everyone follows the dose rule $\pi$, i.e. $V(\pi)=E[Y^*\{\pi({X})\}]$. The optimal individualized dose rule is defined as $\pi^{opt}= \arg\max V(\pi)$.

In order to estimate the value function from the observed data, we need to make the following three assumptions similar to those adopted in the causal inference literature \citep{robins2004optimal}. First, we assume $Y=\int_{\mathcal A} \delta(A=a)Y^*(a)da$, where $\delta(\cdot)$ is the Dirac delta function. This corresponds to the stable unit treatment value assumption (also known as the consistency assumption). It assumes that the observed outcome is the same as the potential outcome had the dosage given to the patient be the actual dose. This assumption also implies that there is no interference among patients. 
Second, we assume that the potential outcomes $\{Y^*(a):a\in\mathcal A\}$ are conditionally independent of $A$ given ${X}$, which is also known as the no unmeasured confounders assumption. %This assumption can be naturally satisfied by the design of a randomized dose trial. However, it cannot be validated in an observational study. 
Third, we assume that there exists a $c>0$ such that $p(A=a|{X}={x})\ge c$ for all $a\in \mathcal A, {x}\in \mathcal{X}$, where $p(a|{x})$ is the conditional density of $A=a$ given ${X}={x}$. This is a generalization of the positivity assumption for continuous dosing. Under these assumptions, we can show that $V(\pi)$ can be estimated with the observed data:
 \begin{align*}
    V(\pi)&=E[Y^*(\pi( X))]
    =E_{ X}[E\{Y^*(\pi( X))| X\}]\\
    &=E_{ X}[E\{Y^*(\pi( X))|A=\pi( X), X\}]
    =E_ X[E\{Y|A=\pi( X), X\}].
\end{align*} 
The second equation above is based on the basic property of conditional densities. The third equation above is valid because of the no unmeasured confounder assumption. The fourth equation is based on the consistency assumption. The positivity assumption ensures that the right side of the last equation can be estimated empirically.  In the next section, we will propose a consistent estimator for $V(\pi)$ based on kernel smoothing. 

\subsection{Method}
To estimate the optimal IDR, we first estimate $V(\pi)$ with a kernel based estimator $\hat V(\pi)$ and then estimate $\pi^{opt}$ by directly maximizing the estimated value function $\hat V(\pi)$. We search for the optimal individualized dose rule within a class of dose rules of the form: $\pi_{{\beta}}({x})=\pi({x};{{\beta}})\in\mathcal G$, where $\mathcal G=\{g({\beta}^T{x}),{\beta}\in \mathbb R^d\}$, and $g:\mathbb R\to \mathcal A$ is a predefined link function to ensure that the suggested dosage is within the safe dose range. Thus $\hat \pi^{opt}=\arg\max_{\pi\in\mathcal G} \hat V(\pi).$

Notice that $\hat\pi^{opt}$ is an estimator of the optimal IDR within $\mathcal G$:
$
   \pi_{{\beta}}^{opt}= \arg\max_{\pi\in\mathcal G} V(\pi)=\pi_{\beta^*},
$ where ${\beta}^*=\arg\max_{{\beta}}V(\pi_{{\beta}})$.
If the true optimal IDR lies in $\mathcal G$, then the proposed $\pi^{opt}_{\beta}(X)=\pi^{opt}$. To see this more clearly, we illustrate with a toy example. If the true model for $E(Y|A,X)$ takes the form: $E(Y| A, X)=\tilde \mu({X}) + Q\{A-g({\tilde\beta}^T {X})\} H({X})$, where $\tilde\mu({X})$ is an unspecified baseline function, $H({X})$ is a non-negative function and $Q(\cdot)$ is a unimodal function which is maximized at 0, then $E(Y \mid A, X)$ is maximized at dose level $A=g({\tilde \beta}^T{x})$ for patients with covariates ${X}={x}$. Thus, the true optimal individualized dose rule is:
\begin{align*}
        \pi^{opt}(X)&=\arg\max_{\pi} V(\pi)\\
        &=\arg\max_{\pi}E_X\Big[ E\big\{Y \mid A=\pi({X}),{X}\big\}\Big]\\
        &=\arg\max_{\pi}E_X\Big[\tilde \mu({X}) + Q\big\{\pi({X})-g(\tilde{\beta}^T {X})\big\} H({X})\Big]\\
        &=\arg\max_{\pi}E_X\Big[Q\big\{\pi({X})-g(\tilde{\beta}^T {X})\big\} H({X})\Big]\\
        &=g(\tilde{\beta}^T {X})\in \mathcal G.
\end{align*}
The last equation above is true because $Q\{\pi(X)-g(\tilde\beta^TX)\}H(X)$ is maximized at $g(\tilde \beta ^T X)$ for each $X\in \mathcal X$.
If a unique maximizer of $V(\pi_{{\beta}})$ exists, then 
        \begin{align*}
            &{\beta}^*=\arg\max_{{\beta}} V(\pi_{{\beta}})\\
        &=\arg\max_{{\beta}}E_X\Big[Q\big\{g({\beta}^T {X})-g({\tilde\beta}^T {X})\big\} H({X})\Big]
        ={\tilde \beta}.
        \end{align*}
Therefore, $\pi_{{\beta}}^{opt}=g({{\beta}^*}^T{X})=\pi^{opt}$. Notice that if, 
$\pi^{opt}=\notin \mathcal G,$ then $\pi_\beta^{opt}\neq \pi^{opt}$. However, $\pi^{opt}_\beta$ is still of interest as long as the form of $\mathcal G$ is flexible enough, because it maximizes the value function among this set of treatment rules.$\beta^*$ can be estimated using $\hat \beta=\arg\max \hat V(\pi_\beta)$, and the optimal IDR within $\mathcal G$ can be thus estimated with  $\hat\pi_{\beta}^{opt}=\pi(X;\hat\beta)$. Notice that we do not need any model assumption on the form of the conditional expectation $E(Y|A,X)$ to apply this method. 

Next, we propose a kernel based estimator for the value function. Let 
$$M({\beta})=V\big(\pi_{{\beta}}\big)=\int_{ x\in\mathcal X} m\{{x},g({\beta}^T{x})\}f( x)d x,$$
where
    $m({x},a)=E(Y\mid {X}={x},A=a)$ and $f({x})$ is the marginal density of ${X}$. Thus, ${\beta}^*=\arg\max_{{\beta}} M({\beta})$. The function $m\{{x},g({\beta}^T{x})\}$ is estimated using the Nadaraya-Watson method given:
    $$
    \hat m\{x,g(\beta^T x)\}= 
    \frac{\frac{1}{n} \sum_{i=1}^n Y_i \frac{1}{h_x^d} K_d (\frac{ x- X_i}{h_x})\frac{1}{h_a} K\big\{\frac{g({\beta}^T{x})-A_i}{h_a}\big\}}
            {\frac{1}{n} \sum_{i=1}^n  \frac{1}{h_x^d}{K_d}(\frac{ x- X_i}{h_x})\frac{1}{h_a} K\big\{\frac{g({\beta}^T{x})-A_i}{h_a}\big\}},
    $$
    where $K(\cdot)$ is a univariate kernel function and $K_d(\cdot)$ is a $d$ dimensional kernel function. Here, $h_x$ and $h_a$ are bandwidths that go to 0 as $n\to \infty$. Note that for simplicity of notification, we use the same bandwidth for all dimensions of ${X}$ here. In practice, we can use different bandwidths for different dimensions of ${X}$ to increase the efficiency of the estimation. Moreover, the marginal density of ${X}$ is estimated by $\hat f({x})= (1/n) \sum_{i=1}^n  {K_d}\{( x- X_i)/{h_x}\}/h_x^d$. The estimated value function can thus be written as:
\begin{align*}
            M_n({\beta})=\int_{ x} 
            & \Big[\frac{\frac{1}{n} \sum_{i=1}^n Y_i \frac{1}{h_x^d} K_d (\frac{ x- X_i}{h_x})\frac{1}{h_a} K\big\{\frac{g({\beta}^T{x})-A_i}{h_a}\big\}}
            {\frac{1}{n} \sum_{i=1}^n  \frac{1}{h_x^d}{K_d}(\frac{ x- X_i}{h_x})\frac{1}{h_a} K\big\{\frac{g({\beta}^T{x})-A_i}{h_a}\big\}}\Big]
            \Big[\frac{1}{n}\sum_{i=1}^n \frac{1}{h_x^d}{K_d}(\frac{ x- X_i}{h_x})\Big]d x.
\end{align*}
Then ${\beta}^*$ is estimated with $\hat{{\beta}}_n=\arg\max_{{\beta}\in\Theta} M_n\big({\beta}\big)$, where $\Theta$ is a compact subset of $\mathbb R^d$ containing ${\beta}^*$.
%----------------------------------------------
\subsection{Computational Details}

To implement the proposed method, the R package optimr() is used for optimization of the objective function. The integral in $M_n({\beta})$ is estimated by taking the average of $q$ grid points in the covariate space. In our implementation, we chose $q = 3000$. In order to find the global maximizer of $M_n({\beta})$, we start optimization from $d$ different initial points $\{(1,0,...,0), (0,1,0,...,0),$ $...,(0,...,0,1)\}$ and choose the one that leads to the maximal objective function value. Denote the maximizer as $\hat{{\beta}}_n$. When there is only one continuous covariate included, following the theoretical rate of the bandwidth parameters, the bandwidths are chosen as $h_x=C_x \text{sd}(X)n^{-1/4.5}$, $h_a=C_a \text{sd}(A)n^{-1/4.5}$, where $C_x$ and $C_a$ are constants between $0.5$ and $3.5$. %When there are multiple continuous covariates, different bandwidths are used for each dimension of ${X}$. Choosing the constants in the bandwidths is thus nontrivial. For the simulation setting 5, we arbitrarily choose the rate of the bandwidths as $n^{-1/8}$ and then use 5-fold cross validation to choose the constants $C_x$, $C_a$ which minimize the mean squared error. % of the Nadaraya-Watson estimator.

When the covariates consist of both continuous variables and categorical variables, the categorical variables are stratified for estimation of the value function. Specifically, assume that ${X}=({X}_{1}^T,{X}_{2}^T)^T\in\mathcal X$, where ${X}_{1}\in \mathcal C$ is a $d_1$ dimensional vector of continuous variables and ${X}_{2}\in \mathcal D$ is a $d_2$ dimensional vector of categorical variables. The form of $M_n({\beta})$ then becomes:
\begin{align*}
            M_n({\beta})=\sum_{ x_2\in \mathcal D} \int_{ x_1\in \mathcal C} & \Big[\frac{\frac{1}{n} \sum_{i=1}^n Y_i \tilde K( x,  X_i)\frac{1}{h_a} K\big\{\frac{g({\beta}^T{X})-A_i}{h_a}\big\}}
            {\frac{1}{n} \sum_{i=1}^n  \tilde K( x,  X_i)\frac{1}{h_a} K\big\{\frac{g({\beta}^T{X})-A_i}{h_a}\big\}}\Big]
            \Big\{\frac{1}{n}\sum_{i=1}^n \tilde K( x,  X_i)\Big\}d x_1,
\end{align*}
        where $ x=( x_1^T,  x_2^T)^T$, $ X_i=( X_{i1}^T,  X_{i2}^T)^T$, \\$\tilde K( x,  X_i)=(1/h_x^{d_1}) K_{d_1} \{( x_1- X_{i1})/{h_x}\}I( X_{i2}=x_2)$.
        
The R code for the proposed method is available at: \url{https://github.com/lz2379/Kernel_Assisted_Learning}

%-------------------------------------------------------
\section{Theoretical Results}\label{theoretical-section}
%-------------------------------------------------------

In this section, we establish the asymptotic  properties of $\hat{{\beta}}_n$. To prove these results, we need to make the following assumptions. In the following equations, $\dot f(x)$, $\ddot f(x)$ and $\dddot f(x)$ denote the first, second and third derivatives of the function $f$ with respect to $x$; $\kappa_{0,2}=\int_\mathbb R K^2(v)dv$ and $\dot\kappa_{0,2}=\int_\mathbb R \dot K^2(v)dv$.

    \begin{assumption}{1}
    \label{A1}
    Assume that $\dot K(v)$, $\ddot K(v)$ and $\dddot K(v)$ exist for $v\in \mathcal V$, where $\mathcal V$ is a subset of $\mathbb R$. For $h\to 0^+$ as $n\to \infty$ and constants $l$,$u$ such that $l<0< u$, 
$\int_{[l/h,u/h]\cap \mathcal V}K(v)dv$ $=1-O(h^2)$,
$\int_{[l/h,u/h]\cap \mathcal V}vK(v)dv$ $=O(h)$,
$\int_{[l/h,u/h]\cap \mathcal V}\dot K(v)dv$ $=O(h^3)$,
$\int_{[l/h,u/h]\cap \mathcal V}-v\dot K(v)dv$ $= 1-O(h^2)$,
$\int_{[l/h,u/h]\cap \mathcal V}K^2(v)dv$ $=\kappa_{0,2}-O(h^2)$,
$\int_{[l/h,u/h]\cap \mathcal V}\dot K^2(v)dv$ $=\dot\kappa_{0,2}-O(h^2)$,
$\int_{[l/h,u/h]\cap \mathcal V}\ddot K(v)dv$ $=O(h^4)$,
$\int_{[l/h,u/h]\cap \mathcal V}v\ddot K(v)dv$ $=O(h^3)$,
$\int_{[l/h,u/h]\cap \mathcal V}v^2\ddot K(v)dv$ $=2-O(h^2)$,
$\int_{[l/h,u/h]\cap \mathcal V}\dddot K(v)dv$ $=O(h^3)$,
$\int_{[l/h,u/h]\cap \mathcal V}v \dddot K(v)dv$ $=O(h^2)$,
$\int_{[l/h,u/h]\cap \mathcal V}v^2 \dddot K(v)dv$ $=O(h)$,
$\int_{[l/h,u/h]\cap \mathcal V}v^3 \dddot K(v)dv$ $=O(1)$.
    \end{assumption}
    
    \begin{assumption}
    \label{A2} The function $M({\beta})=V(\pi_{\beta})$ has a unique maximizer ${\beta}^*$. 
    \end{assumption}
    
    \begin{assumption}
    \label{A3} The function $m({x},a)$ is uniformly bounded. The joint density function of ${X}$ and $A$, $f({x},a)$, is uniformly bounded away from 0. In addition, the first, second, third and fourth order derivatives of $m(x,a)$ and $f({x},a)$ with respect to $x$ and $a$ exist and are uniformly bounded almost everywhere. %$f({x},a)$ is bounded away from 0. 
    \end{assumption}
    
    \begin{assumption}
    \label{A4} The covariate $X$ has bounded first, second and third moments. %for the existence of covariance
    \end{assumption}
    
    \begin{assumption} 
    \label{A5} The function $g(\cdot)$ is thrice differentiable almost everywhere and the corresponding derivatives, 
    $\dot g((\cdot),$ $\ddot g(\cdot),$ $\dddot g(\cdot)$ are bounded almost everywhere.
    \end{assumption}
Assumption \ref{A1} can be satisfied by most commonly used kernel functions such as the Gaussian kernel function and all sufficiently smooth bounded kernel functions. Assumption \ref{A2} is an identifiability condition for ${\beta}^*$. Assumptions \ref{A3}--\ref{A5} ensure the existence of the limit of the expectation of $M_n({\beta})$ and the existence of the covariance matrix of the limiting distribution. In the following two theorems, we establish the consistency and asymptotic normality of $\hat{{\beta}}_n$, respectively.
% Discuss the conditions assumed.
\begin{theorem}\label{theorem1}
    Under assumptions \ref{A1}--\ref{A3}, for $h_x$, $h_a$ satisfying \\$nh_x^{2d}h_a^2$ $\to \infty$ as $n\to \infty$, we have
$
    \sup_{{\beta}\in \Theta}|M_n({\beta})- M({\beta})|
$ converge in probability to $0$, 
where $\Theta$ is a compact region containing ${\beta}^*$. Thus, $\hat{{\beta}}_n=\arg\max_{{\beta}\in\Theta} M_n({\beta})$ converge in probability to ${\beta}^*$. 
\end{theorem}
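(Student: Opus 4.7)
My plan is to prove Theorem \ref{theorem1} in the two standard stages for M-estimator consistency. First I would establish the uniform convergence $\sup_{\beta \in \Theta}|M_n(\beta) - M(\beta)| \to 0$ in probability, and then combine this with Assumption \ref{A2} (unique maximizer of $M$) and compactness of $\Theta$ via the standard argmax continuous-mapping theorem to conclude that $\hat{\beta}_n$ converges in probability to $\beta^*$. Continuity of $M(\beta)$ in $\beta$, which the argmax theorem requires, follows from dominated convergence using the smoothness of $m$ and $g$ granted by Assumptions \ref{A3} and \ref{A5}.

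The bulk of the work is the uniform convergence. I would write the integrand of $M_n(\beta)$ as $\hat m\{x, g(\beta^T x)\}\,\hat f(x) = \{\hat N(x,\beta)/\hat D(x,\beta)\}\,\hat f(x)$, where $\hat N$ and $\hat D$ are the kernel averages appearing in the numerator and denominator of the Nadaraya--Watson estimator and $\hat f$ is the marginal kernel density estimator, and then use the add-and-subtract identity
\[
\hat m\, \hat f - m\, f \;=\; \frac{\hat N - m\,\hat D}{\hat D}\,\hat f \;+\; m\,(\hat f - f).
\]
By Assumption \ref{A3}, $\hat D$ concentrates around $f(x, g(\beta^T x))$, which is bounded away from $0$, so on a high-probability event the random denominator can be replaced by its deterministic limit at the cost of a negligible remainder. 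Each of the remaining centered kernel averages has bias of order $O(h_x^2 + h_a^2)$ under the smoothness conditions of Assumption \ref{A3} and stochastic order controlled by the kernel moments of Assumption \ref{A1}. Integrating in $x$ via Fubini and collecting terms yields, for each fixed $\beta$, convergence in probability of $M_n(\beta) - M(\beta)$ to $0$ as soon as $h_x, h_a \to 0$ and $n h_x^d h_a \to \infty$.

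To upgrade this pointwise statement to uniformity over $\beta \in \Theta$, I would use a stochastic equicontinuity argument. Since $\beta$ enters only through $g(\beta^T x)$ and the kernel $K$ is smooth with bounded derivatives (Assumptions \ref{A1} and \ref{A5}), $M_n(\beta)$ is differentiable in $\beta$ with derivatives bounded in terms of kernel moments of order $1/h_a$. Combining this stochastic Lipschitz modulus with a finite $\varepsilon$-net covering of the compact parameter set $\Theta$ and the pointwise convergence at the net points yields uniform convergence; the bandwidth condition $n h_x^{2d} h_a^2 \to \infty$ stated in the theorem is exactly what is needed to offset the $O(1/h_a)$ Lipschitz growth against the stochastic fluctuations of the ratio estimator while letting the $\varepsilon$-net become sufficiently fine. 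I expect the main obstacle to be this equicontinuity step, namely controlling the $\beta$-derivative of the ratio $\hat N/\hat D$ uniformly in both $x$ and $\beta$ while simultaneously keeping the denominator bounded away from $0$ on a high-probability event that holds uniformly. Once that is in place, the rest of the argument is standard.
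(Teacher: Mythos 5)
Your two-stage plan and your decomposition $(\hat m - m)\hat f + m(\hat f - f)$ match the paper's proof: the paper writes $M_n(\beta)=\int (A_n/B_n)\,C_n\,dx$ and splits $M_n-M$ into $\int (A_n/B_n - A/B)\,C_n\,dx$ plus $\int (A/B)(C_n-C)\,dx$, which is the same identity, and the final step is the same argmax theorem under Assumption \ref{A2}. Where you genuinely diverge is the uniformity-in-$\beta$ step. The paper never covers $\Theta$ with a net: it observes that $\beta$ enters the integrand only through the evaluation point $a=g(\beta^T x)$ of the Nadaraya--Watson ratio, so $\sup_{\beta}$ of the first term is bounded by $\sup_{x,a}\bigl|A_n^*(x,a)/B_n^*(x,a)-A^*(x,a)/B^*(x,a)\bigr|$, a supremum over the data space that is free of $\beta$ altogether. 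Uniform convergence of $A_n^*$ and $B_n^*$ over $(x,a)$ is then obtained by Schuster's (1969) device: integration by parts against $F_n-F$ gives $\sup_{x,a}|A_n^*-E A_n^*|\le (C_3/h_x^{d}h_a)\sup_{x,a}|F_n-F|$, and the DKW-type exponential inequality yields a bound of order $\exp(-2nh_x^{2d}h_a^2\epsilon^2/C_3^2)$, which is precisely where the stated condition $nh_x^{2d}h_a^2\to\infty$ comes from. Your $\epsilon$-net plus stochastic-equicontinuity route is workable in principle, but two caveats: (i) you still need uniform-in-$x$ control of the random denominator to keep it bounded away from zero, so you cannot avoid a Schuster-type uniform law anyway; and (ii) a union bound over a net of cardinality polynomial in $1/h_a$ combined with Hoeffding-type concentration for summands of size $O(1/(h_x^{d}h_a))$ typically costs a logarithmic factor, so your sketch would deliver the theorem under $nh_x^{2d}h_a^2/\log n\to\infty$ rather than the stated rate; your assertion that the stated condition is ``exactly what is needed'' for the net argument is not substantiated. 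The paper's reduction to $\sup_{x,a}$ plus DKW buys uniformity in $\beta$ at no extra cost in the bandwidth condition, which is why it is the cleaner route here.
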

%Theorem $\ref{theorem1}$ implies that the estimated $\pi_{{\beta}}^{opt}$ converges to $\pi^{opt}$ if the prespecified class of dose rules $\mathcal G$ contains $\pi^{opt}$. Next, we derive the convergence rate and the limiting distribution of $\hat{{\beta}}_n$.

\begin{theorem}\label{theorem2}
     Under assumptions \ref{A1}--\ref{A5}, for $h_x$, $h_a$ satisfying \\$nh_x^{2d}h_a^2$ $\to \infty$ and $nh_x^dh_a^3\to \infty$ as $n\to \infty$, we have
     \begin{gather*}
         (nh_x^dh_a^3)^{1/2}(\hat{{\beta}}_n-{\beta}^*) \to N\big\{0,{ D} ({\beta}^*)^{-1}{\Sigma}_{ S}({\beta}^*){ D}^{-1}({\beta}^*)\big\} 
     \end{gather*}
     in distribution as $n\to\infty$, where
     \begin{align*}
          D({\beta}^*) =\int_x &\Big[ {m}_{aa} \big\{{x},g({\beta}^T{x})\big\}\dot g^2({\beta}^T{x}) +
          {m}_a\big\{{x},g({\beta}^T{x})\big\}\ddot g({\beta}^T{x})\Big]f({x}) {x x}^T d{x},
    \end{align*}
    \begin{align*}
         {\Sigma}_{ S}({\beta}^*)=\int_{{x}} &\Big[\dot g^2({\beta}^T{x}){xx}^T {\kappa}_{0,2}\dot \kappa_{0,2} f^2({x})\Big]
         \Big[\frac{m_2\big\{{x},g({x}^T{\beta})\big\}-m^2\big\{{x},g({x}^T{\beta})\big\}}{f\big\{{x},g({x}^T{\beta})\big\}}\Big]d{x},
     \end{align*}
 and  $m_a( x,a)=\partial m( x,a)/{\partial a}$,
    $m_{aa}( x,a)= \partial^2 m( x,a)/{\partial a ^2}$, $m_2( x,a)=E(Y^2\mid {X}={x},A=a)$.
\end{theorem}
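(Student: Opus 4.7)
The structure is a standard $M$-estimator asymptotic normality argument: invert a first-order Taylor expansion and apply a central limit theorem to the score. Theorem~\ref{theorem1} ensures that $\hat\beta_n$ is eventually interior to $\Theta$ and satisfies the first-order condition $\dot M_n(\hat\beta_n)=0$. Taylor-expanding around $\beta^*$ gives
\begin{equation*}
    (nh_x^d h_a^3)^{1/2}(\hat\beta_n - \beta^*) = -\bigl[\ddot M_n(\tilde\beta_n)\bigr]^{-1}(nh_x^d h_a^3)^{1/2}\dot M_n(\beta^*)
\end{equation*}
for some $\tilde\beta_n$ between $\hat\beta_n$ and $\beta^*$, so it suffices to prove (i) $\ddot M_n(\tilde\beta_n)\to D(\beta^*)$ in probability and (ii) $(nh_x^d h_a^3)^{1/2}\dot M_n(\beta^*)\to N(0,\Sigma_S(\beta^*))$ in distribution. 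The conclusion then follows by Slutsky's theorem, using the symmetry of $D(\beta^*)$ as a Hessian at the optimum.

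For (i), I would extend the uniform-convergence argument of Theorem~\ref{theorem1} to the second derivative. Differentiating $M_n(\beta)=\int_x \hat m(x,g(\beta^T x))\hat f(x)\,dx$ twice in $\beta$ via the chain rule $\partial_\beta=\dot g(\beta^T x)\,x\,\partial_a$ and invoking standard kernel bias/variance bounds under Assumptions~\ref{A3}--\ref{A5} yields $\sup_{\beta\in\Theta}\|\ddot M_n(\beta)-\ddot M(\beta)\|=o_p(1)$. Direct differentiation of $M(\beta)$ under the integral gives
\begin{equation*}
    \ddot M(\beta)=\int_x\bigl[m_{aa}(x,g(\beta^T x))\dot g^2(\beta^T x)+m_a(x,g(\beta^T x))\ddot g(\beta^T x)\bigr]f(x)\,xx^T\,dx,
\end{equation*}
which equals $D(\beta^*)$ at $\beta=\beta^*$. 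Combined with $\tilde\beta_n\to\beta^*$ (from Theorem~\ref{theorem1}) and continuity of $\ddot M$, this delivers (i).

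For (ii), I would write the Nadaraya--Watson estimator as $\hat m=\hat r/\hat p$, where $\hat r(x,a)$ and $\hat p(x,a)$ are the obvious kernel averages of $Y_i K_d K$ and $K_d K$, and differentiate the composition $\hat m(x,g(\beta^T x))$ in $\beta$ to obtain
\begin{equation*}
    \dot M_n(\beta^*)=\int_x\frac{(\hat r_a-\hat m\hat p_a)(x,g(\beta^{*T} x))}{\hat p(x,g(\beta^{*T} x))}\dot g(\beta^{*T} x)\,x\,\hat f(x)\,dx,
\end{equation*}
with subscript $a$ denoting differentiation in the second argument. Expanding the ratio around $(r,p,m,f)$ through $\hat m-m\approx(\hat r-m\hat p)/p$ and invoking the first-order condition $\int_x m_a(x,g(\beta^{*T} x))\dot g(\beta^{*T} x)xf(x)\,dx=0$ reduces $\dot M_n(\beta^*)$ to $n^{-1}\sum_i\psi_{n,i}+R_n$, where
\begin{equation*}
    \psi_{n,i}=\int_x\phi(x)\bigl(Y_i-m(x,g(\beta^{*T} x))\bigr)\frac{1}{h_x^d h_a^2}K_d\!\Bigl(\tfrac{x-X_i}{h_x}\Bigr)\dot K\!\Bigl(\tfrac{g(\beta^{*T} x)-A_i}{h_a}\Bigr)dx,
\end{equation*}
with $\phi(x)=\dot g(\beta^{*T} x)\,x\,f(x)/p(x,g(\beta^{*T} x))$ and $R_n$ a lower-order remainder. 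A change of variables $x=X_i+h_x u$, Taylor expansion in powers of $h_x$ and $h_a$, and the kernel identities in Assumption~\ref{A1} then evaluate $\mathrm{Var}(\psi_{n,i})$, extracting the constants $\kappa_{0,2}$ and $\dot\kappa_{0,2}$ that appear in $\Sigma_S(\beta^*)$. The Lindeberg--Feller CLT for triangular arrays then yields (ii).

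The principal obstacle is (ii). Since $\dot M_n(\beta^*)$ is intrinsically a $V$-statistic of order two (a product of two kernel averages integrated over $x$, one of them a kernel derivative estimator), the linear representation $n^{-1}\sum_i\psi_{n,i}$ must be justified by a Hoeffding-type projection with careful control of the degenerate quadratic remainder and of the higher-order expansion terms involving $\hat p-p$ and $\hat f-f$. Moreover, the pointwise bias of the kernel derivative estimator is only $O(h_x^2+h_a^2)$, so one must verify that after integration against the smooth weight $\phi$ the scaled bias $(nh_x^d h_a^3)^{1/2}\cdot\mathrm{bias}$ vanishes under the stated bandwidth conditions. This is precisely the ``nontrivial expectation of a U-statistic'' calculation alluded to in the introduction, and will be the most delicate step of the proof.
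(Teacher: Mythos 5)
Your skeleton coincides with the paper's: invert the first-order condition, show the Hessian converges to $D(\beta^*)$, and prove a CLT for the score at rate $(nh_x^dh_a^3)^{1/2}$. Your part (i) is essentially the paper's, up to a cosmetic difference: the paper proves $D_n(\beta^*)\to D(\beta^*)$ in probability pointwise and separately shows $\partial^2 S_{n,l}(\beta)/\partial\beta\partial\beta^T=O_p(1)$ to control the term at $\tilde\beta_n$, rather than proving uniform convergence of $\ddot M_n$ over $\Theta$; either route works. Where you genuinely diverge is part (ii). The paper never forms your $\psi_{n,i}$: it keeps $x$ fixed, writes $S_n(\beta)=\int_x G_n(x;\beta)\,dx$ with $G_n=(\tilde A_nB_n-A_n\tilde B_n)C_n/B_n^2$, uses Lemma~\ref{lemma3} (a Slutsky-type lemma exploiting that $\tilde A_n,\tilde B_n$ converge at the slower rate $(nh_x^dh_a^3)^{1/2}$ while $A_n,B_n,C_n$ converge at $(nh_x^dh_a)^{1/2}$) to linearize the numerator into an i.i.d.\ average $n^{-1}\sum_i\Phi_i(x;\beta)$ for each fixed $x$, obtains a pointwise CLT with asymptotic variance $\Sigma_G(x;\beta)$ carrying the factor $\kappa_{0,2}\dot\kappa_{0,2}$, shows the limiting covariance at $x_1\neq x_2$ vanishes, and only then integrates, taking $\Sigma_S=\int_x\Sigma_G(x;\beta)\,dx$. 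So the Hoeffding projection you anticipate is replaced by a rate-mismatch argument, and the ``expectation of a U-statistic'' advertised in the introduction is the cross-covariance $\mathrm{Cov}\{\Phi_1(x_1;\beta),\Phi_1(x_2;\beta)\}$, not a projection of your $\psi_{n,i}$.

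The concrete gap in your proposal is the claimed variance of $\psi_{n,i}$. Because your $\psi_{n,i}$ already integrates over $x$, the substitution $x=X_i+h_xu$ in $E(\psi_{n,i}\psi_{n,i}^T)$ produces a convolution $\int\!\!\int K_d(u_1)K_d(u_2)\,du_1\,du_2$ of the covariate kernel with itself, not the squared kernel $h_x^{-d}\int K_d^2(u)\,du$. Hence the factor $\kappa_{0,2}h_x^{-d}$ appearing in the theorem's $\Sigma_S$ cannot be ``extracted'' from your decomposition: one finds $\mathrm{Var}(\psi_{n,i})=O(h_a^{-3})$ when $h_x=o(h_a)$ (and a different order otherwise, governed by $h_x/h_a$), so that $(nh_x^dh_a^3)^{1/2}\,n^{-1}\sum_i\psi_{n,i}=o_p(1)$ and your route terminates in a degenerate limit rather than $N\{0,\Sigma_S(\beta^*)\}$. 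To arrive at the statement as written you must adopt the paper's order of operations --- CLT for $G_n(x;\beta^*)$ at each fixed $x$ first, integration over $x$ second --- since both the normalization $(nh_x^dh_a^3)^{1/2}$ and the constant $\kappa_{0,2}$ are intrinsically pointwise-in-$x$ quantities. (Be aware that the paper's own final step, identifying $\mathrm{Var}\{\int_xG_n(x;\beta^*)\,dx\}$ with $\int_x\Sigma_G(x;\beta^*)\,dx$ from a limiting covariance function that is zero off the diagonal, is exactly the delicate point; your integrate-first computation is the calculation that probes it, and reconciling the two is where any complete proof must do real work.)
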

Proofs of the above theorems are based on theory for kernel density estimators \citep{schuster1969estimation} and M-estimation \citep{kosorok2008introduction}. Details of proofs are given in the appendix. Note that the convergence rate is slower then $n^{1/2}$ due to the kernel estimation of the value function. %We should comment on if there are similar type of theoretical results and how to do the variance estimation here. or some other remarks you can think of???)

%-------------------------------------------------------
\section{Simulation Studies}
%-------------------------------------------------------

In this section, we conduct some simulations to show the capability of our proposed method in identifying the optimal individualized dose rule. We first simulate some simple settings with only one covariate.  $X$ is generated randomly from the standard normal distribution. $A$ is generated from the uniform distribution on $[0,1]$. We generate $X$ and $A$ independently to mimic a randomized dose trial where a random dose from the safe dose range is assigned to each patient. The optimal dose rule is $\pi^{opt}(X)=g(\tilde \beta_0+ \tilde \beta_1X)$, where $g(s)=1/\{1+\text{exp}(-s)\}$. $Y$ is generated from a normal distribution with mean $m(A,X)$ and standard deviation $0.5$, where $m(A,X)=\tilde \mu(X)-10\{A-\pi^{opt}(X)\}^2$. We use two different baseline functions for $\tilde \mu(X)$ and two different sets of $(\tilde \beta_0,\tilde \beta_1)$ as shown in Table \ref{t1:sim}. The sample sizes are $n=400$ and $n=800$ and each setting is replicated 500 times. 

The average bias and the standard deviation of the estimated parameters from 500 simulations are summarized in the first half of Table \ref{table:2}. The estimated parameters were close to the true parameters. The third column shows the average of the standard errors estimated with the covariance function formula derived in Theorem \ref{theorem2} (see appendix for details). $95\%$ confidence intervals were calculated with the estimated standard errors. The coverage probabilities are shown in the table. From the result, we can see that the bias and standard deviation of the estimated parameters decreased with larger sample sizes. The coverage probabilities of the confidence intervals were close to $95\%$, supporting the convergence results given in Section \ref{theoretical-section}. 
 
We also study the performance of our method when the training data are from observational studies, where the doses given to the patients may depend on the covariates $X$. The simulation settings are the same as settings 1--4 except that $A$ is generated from the distribution $\text{beta}\big\{2\exp(\tilde \beta_0+\tilde \beta_1X),2\big\}$. The results are summarized in the second half of Table \ref{table:2}. The proposed method was still capable of giving good estimates of the parameters and the coverage of the confidence intervals were close to $95\%$. These simulation implies that the proposed method performs well with data from both randomized trials and observational studies.

 \begin{table}
    \caption{Summary of simulation settings\label{t1:sim}}
    \begin{center}
    \begin{tabular}{c c c}
        \toprule
        &No baseline& With baseline  \\
        &$\tilde \mu(X)=0$&$\tilde \mu(X)=1+0.5 \cos(2\pi X)$\\
\cmidrule(lr){2-3}
        $\tilde \beta_0=0, \tilde \beta_1=0.5$& Setting 1&Setting 3\\
        $\tilde \beta_0=0, \tilde \beta_1=1$& Setting 2& Setting 4\\
        \bottomrule
    \end{tabular}
    \end{center}  
\end{table}

\begin{table}
\setlength{\tabcolsep}{4pt}
\begin{center}
 \caption{Simulation results from 500 replicates for randomized trials and observational studies.\label{table:2}}
\begin{threeparttable}
 \begin{tabular}{c c c c c c c c c c}
\toprule
 \multicolumn{10}{c}{Randomized trials}\\
\midrule
    \multicolumn{2}{c}{}& \multicolumn{4}{c}{$\hat \beta_{n,0}$}& \multicolumn{4}{c}{$\hat \beta_{n,1}$}\\
\cmidrule(lr){3-6} \cmidrule(lr){7-10}
   & n   & Bias$^*$   & SD$^*$  & SE$^*$ & CP & Bias$^*$  & SD$^*$    & SE$^*$ & CP \\
 \cmidrule(lr){1-2} \cmidrule(lr){3-6} \cmidrule(lr){7-10}

 S1       & 400 & 2.5 &46.6 & 47.5& 95.6     & -17.3 & 53.5 & 54.5  & 92.8     \\
         & 800 & 2.4 &33.7 & 33.4& 95.8     & -19.5 & 37.3 & 38.5 & 90.2    \\
 S2       & 400 & 2.1 &52.2 & 54.4 & 95.6     & 38.9 & 91.0 & 93.7 & 94.6    \\
         & 800 & 1.5 &39.1 & 38.1 & 93.8     & 33.0 & 63.0 & 65.9 & 95.8    \\
 S3       & 400 & 2.7 &54.1  & 55.7 & 95.2     & -20.4 & 64.5 &  64.1 & 90.8    \\
         & 800 & 1.6 &38.8  & 39.3 & 95.0     & -18.9 & 43.7 &  45.4 & 92.0   \\
 S4       & 400 & 2.4 &61.8  & 63.4 & 95.4     & 39.4 & 103.5 & 111.2 & 96.2    \\
         & 800 & -1.5 &44.4  & 44.3 & 94.6     & 33.6 & 75.0 & 77.5 & 95.6    \\
\midrule
\multicolumn{10}{c}{Observational studies}\\
\midrule
\multicolumn{2}{c}{}& \multicolumn{4}{c}{$\hat \beta_{n,0}$}& \multicolumn{4}{c}{$\hat \beta_{n,1}$}\\
\cmidrule(lr){3-6} \cmidrule(lr){7-10}
   & n   & Bias$^*$   & SD$^*$  & SE$^*$ & CP & Bias$^*$  & SD$^*$    & SE$^*$ & CP \\
 \cmidrule(lr){1-2}  \cmidrule(lr){3-6} \cmidrule(lr){7-10}
 S1       & 400 & 13.9 & 80.5 & 82.4 & 96.0     & 32.4 & 97.7 & 102.1  & 94.6     \\
         & 800 & 8.5   & 47.3 & 47.0 & 94.6     & -19.6 &56.7 & 58.2 & 92.2    \\
 S2       & 400 & 21.9 &83.3 & 88.1 & 96.4     & 7.6& 146.4 & 150.4 & 95.2    \\
         & 800 & 17.8  &63.4 & 60.9 & 93.0     & 0.6 & 94.0 & 103.2 & 98.2    \\
 S3       & 400 & 13.4 & 89.4 & 90.2 & 95.6     & 33.2 & 109.3 & 112.3 & 94.8     \\
         & 800 &  9.0 & 53.1  & 50.6 & 93.0        & -22.8 & 60.2 &  63.2 & 93.4    \\
 S4       & 400 & 21.6 &91.3  & 97.1 & 96.2     & 5.0 & 165.4 & 169.3 & 95.8    \\
         & 800 & 20.3 &71.2  & 67.6 & 93.2     & 2.0 & 109.0 & 116.8 & 97.0    \\
\bottomrule
\end{tabular}
\begin{tablenotes}
\footnotesize{
\item[1] Note: * columns are in $10^{-3}$ scale
\item[2] Note: SD refers to the standard deviation of the estimated parameters from 500 replicates, SE refers to the mean of the estimated standard errors calculated by our covariance function, CP refers to the coverage probability of the $95\%$ confidence intervals calculated using the estimated standard errors.}
\item[3] Note: The worst case Monte Carlo standard error for proportions is $1.3\%$.
\end{tablenotes}
\end{threeparttable}
\end{center}
\end{table}

Under settings 1--4, we compare our method with linear based O-learning (LOL) and kernel based O-learning (KOL) proposed in \citet{chen2016personalized} and a discretized dose rule estimated using Q-learning.  For discretized Q-learning, we divide the safe dose range into 10 equally spaced intervals: $\mathcal A=\mathcal A_1\cup \dots \cup \mathcal A_{10}$ and create an indicator variable for each of the dose intervals ${I}=(I_1,I_2,\dots,I_{10})$, where $I_j=I(A\in \mathcal A_j)$, $j=1,\dots, 10$. The covariates included in the regression models are $({X}, {X}^2, {I}, {IX},{IX^2} )$. To this end, an optimal dose range is selected for each individual and the middle point of the selected interval is suggested to the patient. The results from $500$ replicates are summarized in Table \ref{table:setting1-4value}.  Each column is the average value function of the dose rule estimated by the corresponding method. The value function is evaluated at a testing dataset. The numbers in the parentheses are the standard deviation of the estimated value functions. From Table \ref{table:setting1-4value}, we see that the proposed method performed the best under most settings. In the simulation for observational studies, O-learning performed the best when the sample size was small. However, the proposed method performed comparatively well and performed better as the sample size increased. The discretized Q-learning method did not provide a good dose suggestion in this case.
%--------------------------------------------
\begin{table}
\begin{center}
\caption{Value estimate $ V(\hat\pi)$ from 500 simulations in settings 1-4\label{table:setting1-4value}} 
\begin{threeparttable}
 \begin{tabular}{c c c c c c}
\toprule
  \multicolumn{6}{c}{Randomized trials} \\
\midrule
         & n &DQ& LOL & KOL & KAL  \\
\cmidrule(lr){1-2}  \cmidrule(lr){3-6}
    S1&  400& -38.1(1.9)& -7.7(7.0)& -16.5(8.2)&\textbf{-2.7(2.7)}\\
      &  800& -32.7(0.9)& -3.9(3.7)& -9.3(4.3)&\textbf{-1.9(1.8)}\\

    S2&  400& -33.9(1.3)& -18.1(10.5)& -31.7(12.5)&\textbf{-3.3(3.7)}\\
      &  800& -20.0(0.8)& -15.6(7.5)& -20.4(7.4)&\textbf{-1.9(1.8)}\\

    S3&  400& -41.6(11.4)& -8.5(14.1)& -17.2(14.2)&\textbf{-3.7(12.4)}\\
      &  800& -61.2(11.5)& -4.3(12.1)& -10.0(12.7)&\textbf{-2.4(11.7)}\\

    S4&  400& -52.5(11.9)& -21.3(17.7)& -33.3(17.4)&\textbf{-4.2(12.5)}\\
      &  800& -23.3(11.9)& -17.8(14.5)& -22.4(13.9)&\textbf{-2.4(11.7)}\\
         
\midrule
\multicolumn{6}{c}{Observational studies}\\
\midrule
         & n &DQ& LOL& KOL & KAL  \\
\cmidrule(lr){1-2}  \cmidrule(lr){3-6}
    S1&  400&  -29.5(1.2)&\textbf{-7.4(6.3)}& -15.6(7.7) &-8.1(8.0)\\
      &  800&  -24.4(0.9)& -5.5(4.3)& -10.3(5.3)&\textbf{-3.1(3.1)}\\

    S2&  400&  -16.0(7.6)&-14.1(6.6) &  -21.3(9.7)&\textbf{-8.2(8.3)}\\
      &  800&  -32.0(1.1)& -12.8(4.7)& -12.2(4.5)&\textbf{-4.4(4.4)}\\

    S3&  400&  -29.1(11.6)& \textbf{-8.1(13.5)}& -11.7(14.6)&-9.8(15.7)\\
      &  800&  -34.2(11.8)& -6.2(12.4)& -11.2(13.3)&\textbf{-3.5(12.0)}\\

    S4&  400&  -83.8(13.8)& -14.7(13.0)& -20.7(15.8)&\textbf{-10.0(15.8)}\\
      &  800&  -34.1(11.3)& -13.5(12.1)& -11.2(12.3)&\textbf{-5.1(12.2)}\\
\bottomrule
\end{tabular}
\begin{tablenotes}
\footnotesize{
\item[1] Note: DQ refers to discretized Q-learning, LOL refers to linear O-learning, KOL refers to kernel based O-learning, KAL refers to kernel assisted learning.
\item[2] All columns are in $10^{-3}$ scale. For settings 3 and 4, the numbers in the table are the value estimate $V(\hat\pi)-1$ for the purpose of comparison with the first two settings.}
\end{tablenotes}
\end{threeparttable}
\end{center}
\end{table}

%-------------------------------------------------------
\section{Warfarin Data Analysis}
%-------------------------------------------------------

Warfarin is a widely used anticoagulant for prevention of thrombosis and thromboembolism. Although highly efficacious, dosing for warfarin is known to be challenging because of the narrow therapeutic index and the large variability among patients \citep{johnson2011clinical}. Overdosing of warfarin leads to bleeding and underdosing diminishes the efficacy of the medication. The international normalized ratio (INR) measures the clotting tendency of the blood. An INR between $2$--$3$ is considered to be safe and efficacious for patients. Typically, the warfarin dosage is decided empirically: an initial dose is given based on the population average, and adjustments are made in the subsequent weeks while the INR of the patient is being tracked. A stable dose is decided in the end to achieve an INR of $2$--$3$ \citep{johnson2011clinical}. The dosing process may take weeks to months, during which the patient is constantly at risk of bleeding or under-dosing. Therefore, a quantitative method for warfarin dosing will greatly decrease the time, cost and risks for patients. 

The following analysis uses the warfarin dataset collected by \citet{international2009estimation}. In the original paper, a linear regression was used to predict the stable dose using clinical results and pharmacogenetic information, including age, weight, height, gender, race, two kinds of medications (Cytochrom P450 and Amiodarone), and two genotypes (CYP2C9 genotype and VKORC1 genotype). This prediction method is based on the assumption that the stable doses received by the patients are optimal. However, later studies showed that the suggested doses by the International Warfarin Pharmacogenetic Consortium are suboptimal for elderly people, implying that the optimal dose assumption might not be valid \citep{chen2016personalized}. 

We apply our proposed method to this dataset to estimate the optimal individualized dose rule for warfarin. Instead of using only the data of the patients with stablized INR, we include all patients who received weekly doses between 6 mg to 95 mg. The medication information is missing for half of the observations and is therefore excluded from our analysis. Observations which are missing in the other variables are removed from the dataset, resulting in a total of $3567$ patients. The outcome variable is defined as $Y_i=-(\text{INR}_i-2.5)^2$ for the $i$-th observation. Stratification of the categorical variables is needed for the kernel density estimation. In order to ensure that there are enough observations in each stratified group, we consider only categorical variables that are distributed comparatively even among different groups. In our analysis, we use three variables: height, gender and the indicator variable for VKORC1 of type AG. Before we apply the proposed the method, we normalize all the variables by $X_{i,j}=(X_{i,j}- \bar X_j)/\text{sd}(X_j)$, where $j=1,2,3$, $i=1,\dots, n$. $\bar X_j=\sum_{i=1}^n X_{i,j}/n$ and sd$(X_j)$ is the standard deviation of the $j$-th variable. %variable selection?

The estimation results are shown in Table \ref{t5}. The p-value is obtained for each of the parameters. The result implied that the optimal dose for male is higher than the optimal dose for female given all the other variables being the same. It was also implied that the patients with genotype VKORC1$=$AG need higher doses than the patients with VKORC1$\neq$ AG.  We use the same variables and compare our method with O-learning and the discretized Q-learning method. For the discretized Q-learning method, we also divide the dose range into 10 equally spaced intervals. The suggested doses by the three methods are shown in Fig. \ref{fig5:suggest}. The result shows a tendency of the discretized Q-learning to suggest extreme doses, which is not ideal in real application. This might be due to the fact that the higher dose intervals contain small numbers of observations, and thus the estimated models are dominated by a few subjects. 

To evaluate the estimated dose rules of these methods, we randomly take two thirds of the data as training data and the rest of the data as testing data. The optimal individualized dose rule is estimated with the training data. The value function of the suggested individualized dose rule $V(\hat\pi)$ is estimated with the average of the Nadaraya Watson estimator for $E\{Y\mid {X}, A=\hat\pi({X})\}$ in the testing dataset. The tuning parameters for the Nadaray-Watson estimators are taken as $h_x=1.25 \text{sd}(X) n_{\text{test}}^{-1/4.5}$ and $h_a=1.75 \text{sd}(A) n_{\text{test}}^{-1/4.5}$, where $n_{\text{test}}=1189$ is the size of the testing dataset. The process is repeated 200 times. The distribution of the estimated value of the suggested dose is shown in Fig. \ref{fig6:value}. The suggested individualized dose rule with our proposed method lead to better expected outcomes in the population compared to the other methods. The performance of the discretized Q-learning method was not stable as shown in the result. However, this result was only based on the three variables selected, while in reality, the two medications (Cytochrome P450 enzyme and Amiodarone) and the genotype CYP2C9 are also of significant importance in warfarin dosing.  The computation complexity of our proposed method restricted its capability of handling higher dimensional problems.  

\begin{figure}
    \centerline{\includegraphics[width=\columnwidth]{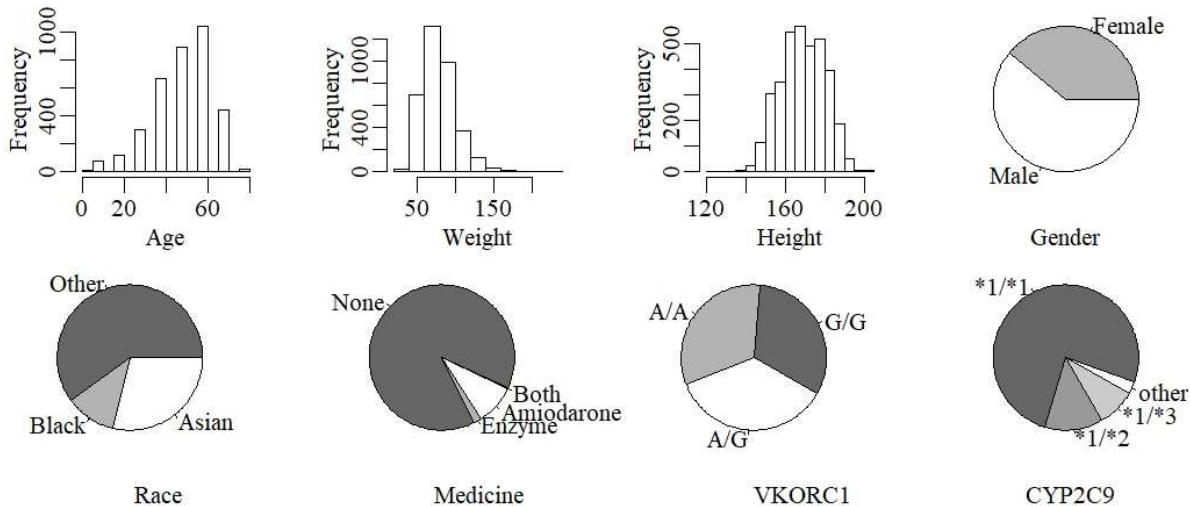}}
    \caption{Distribution of the variables in the warfarin dataset.\label{fig3:var}}
\end{figure}

\begin{table}
\caption{Estimated $\hat{{\beta}}$ with warfarin data with kernel assisted learning}
    \centering
    \begin{tabular}{c c c c}
        \hline
         Variable& Estimated Parameter & SE & p-value \\
\cmidrule(lr){1-1}  \cmidrule(lr){2-4}
          Intercept &   -0.463   & 0.064 &0.000\\
          Height    &   -0.263   & 0.101 &0.005\\
          Gender    &    0.268   & 0.134 &0.023\\
          VKORC1.AG &   -0.4682  & 0.094 &0.000\\
         \hline
    \end{tabular}
    \label{t5}
\end{table}

%\begin{figure*}
%    \subfloat[]{
%        \includegraphics[width=0.5\textwidth]{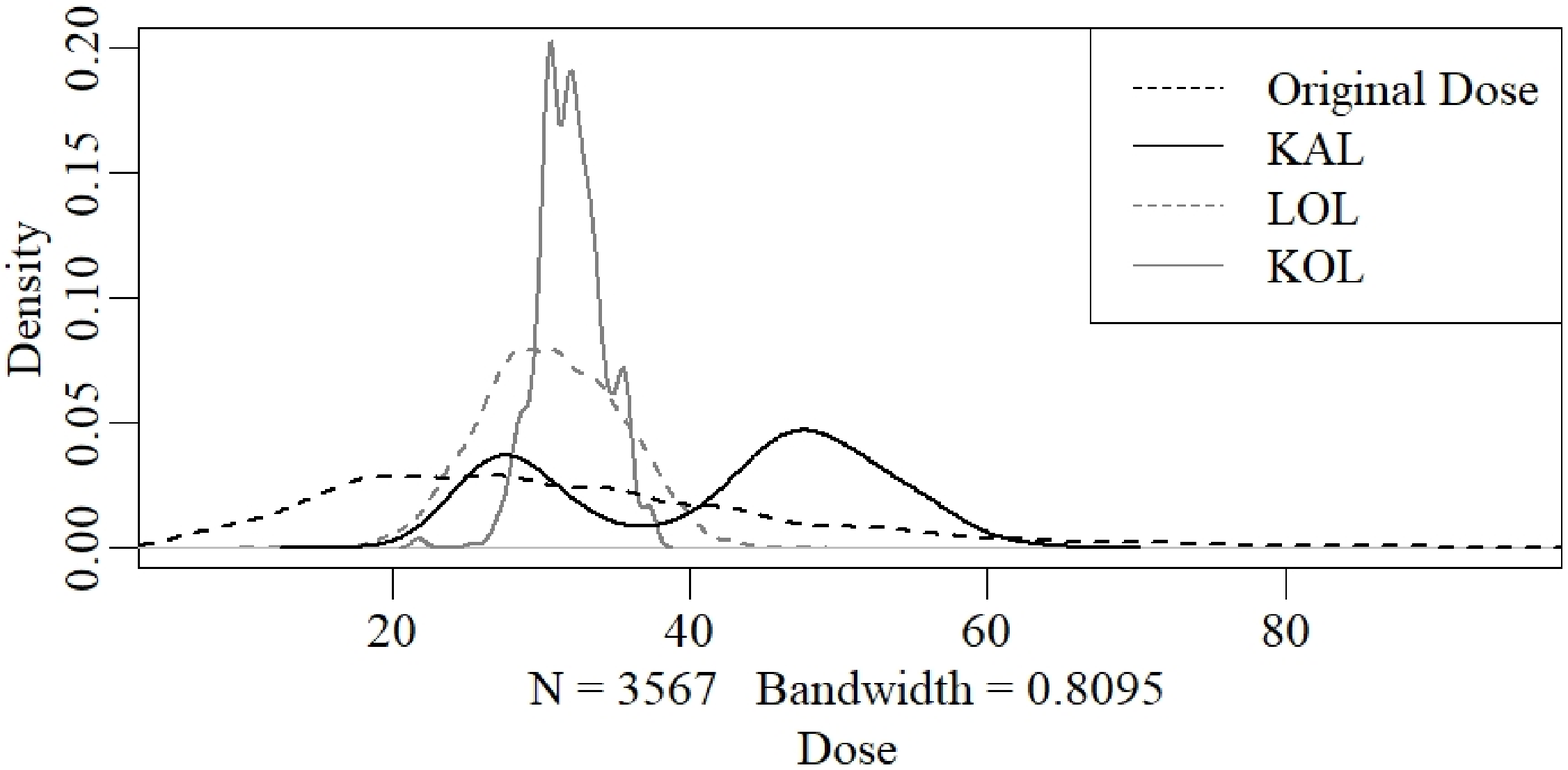}
%        }
%    \subfloat[]{
%        \includegraphics[width=0.25\textwidth]{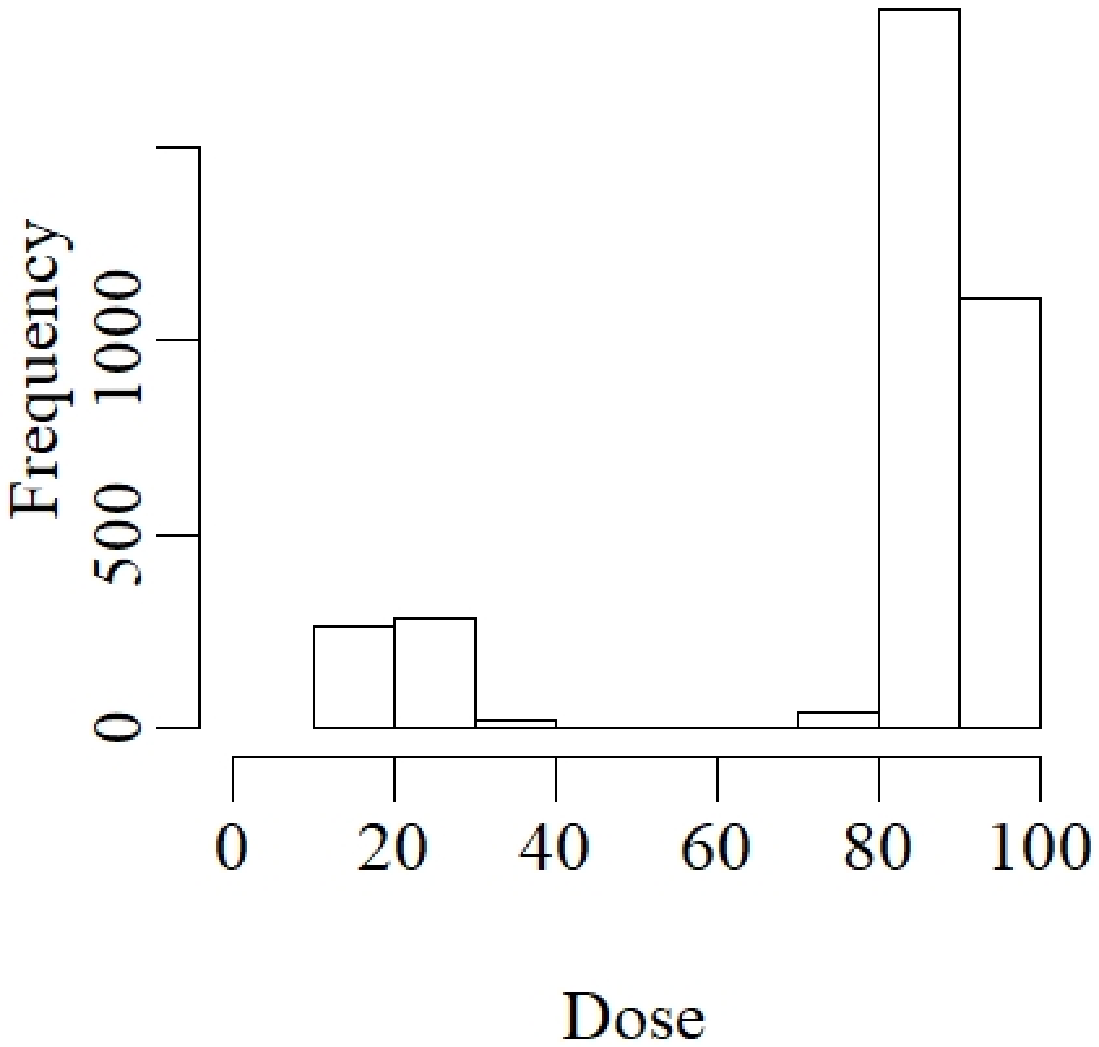}
%        }
%    \caption{Empirical distribution of suggested doses of several methods for the walfarin dataset. Panel (a) presents the distribution of the original doses from the dataset, the distributions of the suggested doses using the kernel assisted learning (KAL), linear outcome-weighted learning (LOL) and kernel based outcome-weighted learning (KOl). Panel (b) presents the histogram of the suggested doses using discretized Q-learning.\label{fig5:suggest}}
%\end{figure*}

\begin{figure}[htp]
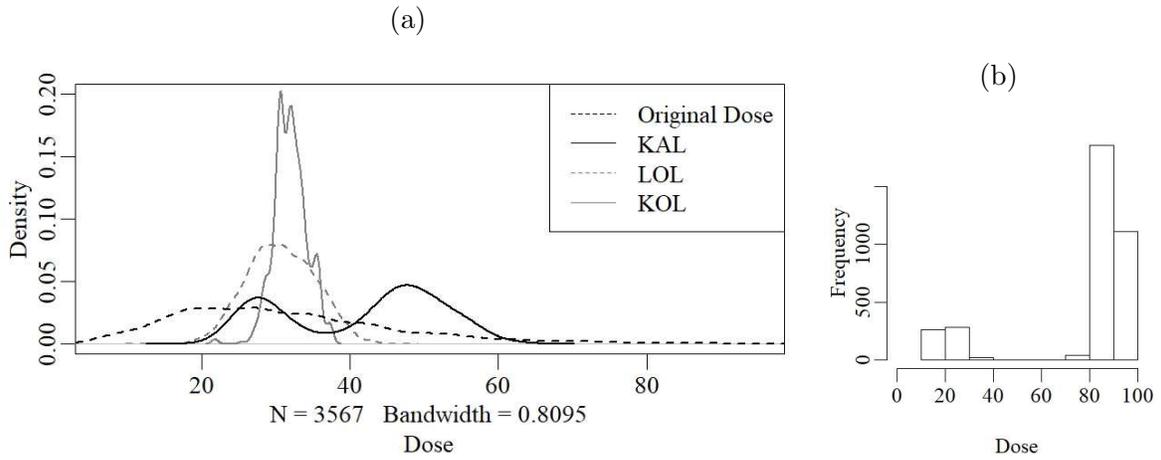

    \begin{center}
        \begin{subfigure}[b]{.65\textwidth}
        \centering
        \caption{}
        \includegraphics[width=\textwidth]{Figure2a.eps}
    \end{subfigure}
        \begin{subfigure}[b]{.28\textwidth}
        \centering
        \caption{}
        \includegraphics[width=\textwidth]{Figure2b.eps}
        \end{subfigure}
    \end{center}
    \caption{Empirical distribution of suggested doses of several methods for the walfarin dataset. In panel (a), the black line is the distribution of the original doses from the dataset. The green line denotes the result from linear O-learning. The blue line denotes the result from kernel based O-learning. The red line denotes the result from kernel assisted learning. Panel (b) is the histogram of the suggested doses using discretized Q-learning.\label{fig5:suggest}}
\end{figure}

\begin{figure*}
    \begin{center}
    \includegraphics[width=0.6\textwidth]{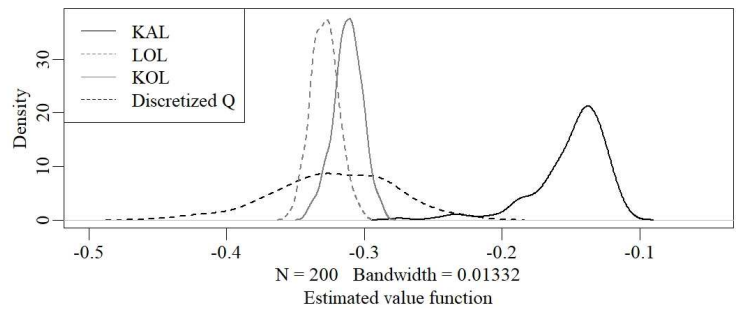}
    \end{center}
    \caption{Empirical distribution of the estimated value function over 200 repetitions for the warfarin dataset.\label{fig6:value}}
\end{figure*}

%-------------------------------------------------------
\section{Discussion and Conclusion}
%-------------------------------------------------------

 The proposed kernel assisted learning method for estimating the optimal individualized dose rule provides the possibility of conducting statistical inference with estimated dose rules, thus providing insights on the importance of the covariates in the dosing process. In our simulation settings, our method was capable of identifying the optimal individualized dose rule when the optimal dose rule was inside the prespecified class of rules. In the warfarin dosing case, based on the three covariates selected, the suggested dose lead to better expected clinical result compared to the other methods. Application of the proposed methodology is not limited to optimal dose finding. This method can also be applied to any scenario where continuous decision making is desired.
 
 The proposed method has several possible extensions. Notice that the form of the prespecified rule class can be extended to a link function with a nonlinear predictor $g\{{\Psi}({X})^T{\beta}\}$ where  ${\Psi}(\cdot)=\{\Psi_1(\cdot),...,\Psi_c(\cdot)\}^T$ are some prespecified basic spline functions and ${\beta}\in\mathbb R^c$ . The accuracy of the approximated value function might also be improved by extending the multivariate kernel ${K_d}({x}/h_x)/h_x$ to $|{H}|^{-1/2}{K_d}({H}^{-1/2}{x})$ \citep{duong2005cross}.  

One weakness of the proposed method is that the accuracy of the estimated value function is sensitive to the choice of bandwidth. The kernel density estimator in the denominator of $M_n({\beta})$ might lead to large bias when the bandwidths are not properly chosen. As the dimension of ${X}$ increases, the choice of the bandwidths is nontrivial. %In this article, we use cross validation by minimizing mean squared error of the Nadaraya-Watson estimator in settings where there are multiple covariates. However, due to the complex form of $M_n({\beta})$, this method might not be optimal when the dimension of the covariates increases. 
The criteria for choosing bandwidths needs to be studied further. 

In the future, we are interested in variable selection when dealing with high dimensional data. Extensions to multi-stage dose finding problems is also of interest. Personalized Dose Finding is still a relatively new problem. With the complicated mechanisms of various diseases, there are many more problems to be tackled in this realm.

\bibliographystyle{ACM-Reference-Format}
\bibliography{paperref}

%%
%% If your work has an appendix, this is the place to put it.

\appendix

%-------------------------------------------------------
\section{Proof of Theorem \ref{theorem1}}
%-------------------------------------------------------

We first prove the uniform convergence of $M_n({\beta})$ to $M({\beta})$. For simplicity of notation, let's define: 

    $
    m_{ x}( x,a)= \partial m( x,a)/{\partial  x}
    $,
    $
    m_{2 x}( x,a)=\partial m_2( x,a)/{\partial  x}
    $,
    $
    m_{2a}( x,a)=\partial m_2( x,a)/{\partial a}
    $.
    Similarly, $f_a({x},a)=\partial f_{X,A}({x},a)/ {\partial a} $, $f_x({x},a)=\partial f_{X,A}({x},a)/{\partial {x}}$. We write $M_n({\beta})$ as 
    $$
    M_n({\beta})=\int_x \frac{A_n({x};{\beta})}{B_n({x};{\beta})}C_n({x})d{x},
    $$
where 
\begin{gather*}
    A_n({x};{\beta})=\frac{1}{n} \sum_{i=1}^n Y_i  \frac{1}{h_x^q}{K_q}(\frac{{x}-{X}_i}{h_x})\frac{1}{h_a} K\big\{\frac{g({\beta}^T{x})-A_i}{h_a}\big\},\\
    B_n({x};{\beta})=\frac{1}{n} \sum_{i=1}^n    \frac{1}{h_x^q}{K_q}(\frac{{x}-{X}_i}{h_x}) \frac{1}{h_a} K\big\{\frac{g({\beta}^T{x})-A_i}{h_a}\big\},\\
    C_n({x})=\frac{1}{n} \sum_{i=1}^n  \frac{1}{h_x^q}{K_q}(\frac{{x}-{X}_i}{h_x}).
\end{gather*}
Notice that $M({\beta})$ can be written as
$$M({\beta})=\int_{{x}}\frac{A({x};{\beta})}{B({x};{\beta})}C({\beta})d{x},$$  
where $A({x};{\beta})=m\{{x},g({\beta}^T{x})\}f_{X,A}\{{x},g({\beta}^T{x})\}$,  $B({x};{\beta})=f_{X,A}\{{x},g({\beta}^T{x})\}$ and $C({x})=f_X({x})$.
Thus,
\begin{gather*}
    \sup_{{\beta}}\Big|M_n({\beta})-M({\beta})\Big|
    =
    \sup_{{\beta}}\Big| \int_x\Big\{\frac{A_n({x};{\beta})}{B_n({x};{\beta})}C_n({x})-\frac{A({x};{\beta})}{B({x};{\beta})}C({x})\Big\} d{x} \Big|\\
    \leq \sup_{{\beta}} \Big|\int_x \Big\{ \frac{A_n({x};{\beta})}{B_n({x};{\beta})}-\frac{A({x};{\beta})}{B({x};{\beta})}\Big\}C_n({x})d{x}\Big|+\sup_{{\beta}} \Big|\int_x \frac{A({x};{\beta})}{B({x};{\beta})}\Big\{C_n({x})-C({x})\Big\} d{x}\Big|\\
    \leq \sup_{a,{x}}\Big|\frac{A^*_n({x},a)}{B^*_n({x},a)}-\frac{A^*({x},a)}{B^*({x},a)}\Big|\Big\{\int_x C_n({x})d{x}\Big\}+\sup_{a,{x}} |m\big({x},a)| \Big\{\int_x \Big|C_n({x})-C({x})\Big|d{x}\Big\},
\end{gather*}
where 
\begin{gather*}
    A_n^*({x},a)=\frac{1}{n}\sum_{i=1}^n Y_i\frac{1}{h_x^q}{K_q}(\frac{{x}-{X}_i}{h_x})\frac{1}{h_a} K(\frac{a-A_i}{h_a}),\\
    B^*_n({x},a)=\frac{1}{n}\sum_{i=1}^n \frac{1}{h_x^q}{K_q}(\frac{{x}-{X}_i}{h_x})\frac{1}{h_a} K(\frac{a-A_i}{h_a}),
\end{gather*} and $A^*({x},a)=m({x},a)f_{X,A}({x},a)$,  $B^*({x},a)=f_{X,A}({x},a)$. It is trivial to prove that $\int_x C_n({x})d{x}=\int_x C({x})d{x}=1$. Thus, under the boundedness of $m({x},a)$, we only need to show that:
\begin{gather*}
    \int_{{x}} |C_n({x})-C({x})|d{x}=o_p(1),\\
    \sup_{a,x}\Big|\frac{A^*_n({x},a)}{B^*_n({x},a)}-\frac{A^*({x},a)}{B^*({x},a)}\Big|\to 0 =o_p(1).
\end{gather*}
To prove the first equation, notice that $\int_{{x}} |C_n({x})-C({x})|d{x}\le \int_{{x}} C_n({x})d{x}+\int_{{x}}C({x})d{x}=2$. By the dominated convergence theorem, it suffices to show the uniform convergence of the kernel density estimate $C_n({x})$ to $C({x})$, which can be proved according to Schuster (1969).

For the second equation,
\begin{gather*}
    \sup_{a,{x}}\Big|\frac{A^*_n({x},a)}{B^*_n({x},a)}-\frac{A^*({x},a)}{B^*({x},a)}\Big|\\
    =\sup_{a,{x}} \Big|\frac{\big\{A_n^*({x},a)-A^*({x},a)\big\}B^*({x},a)-A^*({x},a)\big\{B_n^*({x},a)-B^*({x},a)\big\}}{B_n^*({x},a)B^*({x},a)}\Big|\\
    \leq \sup_{{x},a}\Big|\frac{A_n^*({x},a)-A^*({x},a)}{B_n^*({x},a)}\Big|+\sup_{{x},a}\Big|\frac{\big\{B_n^*({x},a)-B^*({x},a)\big\}A^*({x},a)}{B^*({x},a)B^*_n({x},a)}\Big|.
\end{gather*}
Under the boundedness of $A^*({x},a)$ and the assumption that $f_{X,A}({x},a)$ is uniformly bounded away from 0, it suffices to show that
    $
    \sup_{a,x}\big|A_n^*({x},a)-A^*({x},a)\big|=o_p(1)$, and
    $\sup_{a,x}\big|B_n^*({x},a)-B^*({x},a)\big|=o_p(1)$.

To prove the uniform convergence of $A_n^*({x},a)$, notice that: 
\begin{equation}\label{A*twoparts}
    \sup_{a,x}\Big|A_n^*({x},a)-A^*({x},a)\Big|\le \sup_{a,x}\Big|A_n^*({x},a)-E\big\{A_n^*({x},a)\big\}\Big|+\sup_{a,x}\Big|E\big\{A_n^*({x},a)\big\}-A^*({x},a)\Big|.
\end{equation}
We prove the convergence of the two parts on the right side separately. First we obtain,
\begin{align*}
    &E\big\{ A_n^*({x},a)\big\}= \\
    &\int_{{x}_i,a_i}\int_{y_i} \big\{y_i\frac{1}{h_x^q}{K_q}(\frac{{x}-{x}_i}{h_x})\frac{1}{h_a} K(\frac{a-a_i}{h_a}) \big\}    f_{Y|X,A}(y_i|{x_i},a_i)f_{X,A}({x}_i,a_i)dy_i d{x}_i da_i\\
    &= \int_{{x}_i,a_i}  \big\{ m({x}_i,a_i)\frac{1}{h_x^q}{K_q}(\frac{{x}-{x}_i}{h_x})\frac{1}{h_a} K(\frac{a-a_i}{h_a})  \big\} f_{X,A}({x}_i,a_i) d{x}_i da_i\\
    &= \int_{{u}} \int_{v=(a-1)/{h_a}}^{a/{h_a}} m({x}-{u}h_x,a-vh_a) {K_q}({u}) K(v) f_{X,A}({x}-{u}h_x,a-vh_a) dv d{u}\\
    &=\int_{{u}} \int_{v=(a-1)/{h_a}}^{a/{h_a}} \big\{m\big({x},a\big)-{u}h_x m_x\big({x},a\big)-vh_am_a\big({x},a\big)+O(h_x^2)+O(h_xh_a)+O(h_a^2)\big\}\\&K({u})K(v)
    \big\{f_{X,A}\big({x},a\big)-{u}h_x f_x \big({x},a\big)- vh_af_a\big({x},a\big)+O(h_x^2)+O(h_xh_a)+O(h_a^2)\big\} dv d{u}\\
    &= m\big({x},a\big)f_{X,A}\big({x},a\big)+ O(h_x^2)+O(h_a^2)\\
    &=A^*({x},a)+ O(h_x^2)+O(h_a^2),
\end{align*}
where the third equality is achieved by letting ${u}=({x}-{X})/{h_x}$ and $v=\{g({\beta}^T{x})-A\}/{h_a}$. The fourth equation is from Taylor expansion and the fifth equation is based on Assumption (A1) that $\int_u {u} {K_q}({u})d {u}=0$, $\int_{\{g({\beta}' x)-1\}/{h_a}}^{g({\beta}' x)/{h_a}} K(v)dv=1-O(h_a^2)$ and $\int_{\{g({\beta}' x)-1\}/{h_a}}^{g({\beta}' x)/{h_a}} v K(v)dv=O(h_a)$. By the assumption of uniform boundedness of the second order derivatives of $m({x},a)$ and $f_{X,A}({x},a)$, we have $\sup_{x,a}|E\big\{ A_n^*({x},a)\big\}-A^*({x},a) |\to 0$. 

Then we prove the convergence of the first part of Equation (\ref{A*twoparts}):
\begin{align*}
    &\sup_{{x},a}\Big|A_n^*({x},a)-E\big\{A_n^*({x},a)\big\}\Big|\\
    &=\frac{1}{h_x^q h_a}\sup_{{x},a}\Big|\int_{{X},A}m({X},A){K_q}(\frac{{x}-{X}}{h_x})K(\frac{a-A}{h_a})d\big\{F_n({X},A)-F({X},A)\big\}\Big|\\
    &=\frac{1}{h_x^q h_a}\sup_{{x},a}\Bigg|\Big[\big\{F_n({x},a)-F({x},a)\big\}m({X},A){K_q}(\frac{{x}-{X}}{h_x})K(\frac{a-A}{h_a})\Big]_{-\infty}^{\infty}\\
    &-\int_{{X},A}\big\{F_n({x},a)-F({x},a)\big\}d\big\{ m({X},A){K_q}(\frac{{x}-{X}}{h_x})K(\frac{a-A}{h_a})\big\}\Bigg|\\
    &\leq\frac{C_3}{h_x^q h_a}\sup_{{x},a}\Big|F_n({x},a)-F({x},a)\Big|,
\end{align*}
where $C_3$ is a constant, $F(X,A)$ is the cumulative joint distribution of $X$ and $A$ and $F_n(x,a)=\{\sum_{i=1}^n I(X_i\le x,A_i\le a)\}/n$. Here $X_i\le x$ means that each term of $X_i$ is less than or equal to the corresponding term of $x$. The last inequality can be obtained by the boundedness of $m({x},a)$, ${K_q}({u})$ and $K(v)$.
  By Lemma 2.1 of Schuster (1969) we know that there exists a universal constant $C_4$ such that for any $n>0$, $\epsilon_n>0$, $P_F\Big\{\sup_{{x},a}\big|F_n({x},a)-F({x},a)\big|>\epsilon\Big\}\leq C_4\exp (-2n\epsilon^2)$. For $n$ sufficiently large:
\begin{gather*}
    P\Big\{\sup_{{x},a}\big|A_n^*({x},a)-A^*({x},a)\big|>\epsilon\Big\}\leq P\Big\{\sup_{{x},a}\Big|A_n^*({x},a)-E\big\{A_n^*({x},a)\big\}\Big|>\frac{\epsilon}{2}\Big\}\\
    \leq P\Big\{\sup_{{x},a}\big|F_n({x},a)-F({x},a)\big|>\frac{h_x^q h_a\epsilon}{2C_3}\Big\}\leq C_4\exp\big(-2n\frac{h_x^{2q}h_a^2\epsilon^2}{C_3^2}\big).
\end{gather*}
If $nh_x^{2q}h_a^2\to \infty$ then $ P\Big\{\sup_{{x},a}\big|A_n^*({x},a)-A^*({x},a)\big|\Big\}\to 0$. %$\sum_{n=1}^\infty exp(-2nh_x^{2q}h_a^2\epsilon^2/C_3^2)$ is finite and $\sum_{n=1}^\infty P\{\sup_{x,a}|A_n^*(x,a)-A^*(x,a)|>\epsilon\}$ is finite, then by Borel-cantelli lemma, $P\{\sup_{x,a}|A_n^*(x,a)-A^*(x,a)|>\epsilon\}\to 0$. 
Thus the uniform convergence of $A_n^*({x},a)$ is proved. The uniform convergence of $B_n^*({x},a)$ can be proved similarly. Thus, we can obtain $\sup_{{\beta}\in\Theta}\big|M_n({x};{\beta})-M({x};{\beta})\big|\xrightarrow{p}0 $. By Theorem 2.10 of Kosorok (2008), we now obtain that $\hat{{\beta}}_n\xrightarrow{p}{\beta}^*$.
%-------------------------------------------------------

%-------------------------------------------------------
\section{Proof of Theorem \ref{theorem2}}
%-------------------------------------------------------
Since $\hat{{\beta}}_n$ and $ \beta^*$ are maximizers of $M_n( \beta)$ and $M( \beta)$, they are solutions of $S_n( \beta)=0$ and $S( \beta)=0$, where, $ {S}({\beta})=\partial M({\beta})/{\partial {\beta}}$ and ${S}_n({\beta})=\partial  M_n({\beta})/ {\partial {\beta}}$. By Taylor expansion, we have
\begin{gather*}
    0={S}_n(\hat{{\beta}}_n)={S}_n({{\beta}}^*)+{D}_n ({{\beta}}^*)(\hat{{\beta}}_n- {{\beta}}^*)+\frac{1}{2}(\hat{{\beta}}_n-{{\beta}}^*)^T\frac{\partial^2}{\partial {{\beta}} \partial {{\beta}}^T} {S}_n(\tilde {{\beta}}) (\hat {{\beta}}_n-{{\beta}}^*)\\
    ={S}_n({{\beta}}^*)+\Big\{{D}_n ({{\beta}}^*)+\frac{1}{2}(\hat{{\beta}}_n-{{\beta}}^*)^T\frac{\partial^2}{\partial {{\beta}} \partial {{\beta}}^T} {S}_n({\tilde{{\beta}}})\Big\} (\hat {{\beta}}_n-{{\beta}}^*),
\end{gather*} where $\tilde{{\beta}}$ is on the line segment connecting  $\hat{{\beta}_n}$ and ${\beta}^*$, ${D}_n({{\beta}})=\partial^2  M_n({{\beta}}) / (\partial {{\beta}}\partial {{\beta}}^T)$, ${D}({{\beta}})=\partial^2  M({{\beta}}) / (\partial {{\beta}}\partial {{\beta}}^T)$. To prove the weak convergence of $\hat{ \beta}_n$, we can first prove that:
\begin{equation}\label{S}
    (n h_x^q h_a^3)^{1/2}\big\{{S}_n({\beta}^*)-{S}({\beta}^*)\big\}\to N\big\{0,{\Sigma}_S({\beta}^*)\big\},
\end{equation} in distribution as $n\to\infty$, 
\begin{equation}\label{D}
    {D}_n ({\beta}^*)-{D}({\beta}^*)=o_p(1),
\end{equation}
\begin{equation}\label{residual}
    \frac{1}{2}(\hat{{\beta}}_n-{\beta}^*)^T\frac{\partial^2}{\partial {\beta} \partial {\beta}^T} {S}_n(\tilde{{\beta}})=o_p(1).
\end{equation}
Then we obtain:
\begin{align*}
    &(n h_x^q h_a^3)^{1/2} (\hat {{\beta}}_n-{{\beta}}^*)=-\Big\{{D}_n ({{\beta}}^*)+\frac{1}{2}(\hat{{\beta}}_n-{{\beta}}^*)^T\frac{\partial^2}{\partial {{\beta}} \partial {{\beta}}^T} {S}_n({{\beta}}^*)\Big\} (n h_x^q h_a^3 )^{1/2} {S}_n({{\beta}}^*)\\
    &\to N\Big\{0,{D}({\beta}^*)^{-1}{\Sigma}_S({\beta}^*){D}({\beta}^*)^{-1}\Big\}
\end{align*}
in distribution as $n\to\infty$.

%--------------
\subsection{Proof of Equation \ref{S}}
%--------------
To prove Equation \ref{S}, we first write ${S}_n({\beta})$ as:
\begin{align*}
    {S}_n({\beta})&=\frac{\partial M_n({\beta})}{\partial {\beta}}=\int_x\frac{\tilde {{ A}}_n({x};{\beta})B_n({x};{\beta})   -A_n({x};{\beta})\tilde {{ B}}_n({x};{\beta})}{B_n({x};{\beta})^2}C_n({x})d{x},
\end{align*} where
    \begin{align*}
    \tilde {{ A}}_n({x};{\beta}) &= \frac{\partial}{\partial{\beta}} A_n({x};{\beta})= \frac{1}{n}\sum_{i=1}^n Y_i \frac{1}{h_x^q}{K_q}(\frac{{x}-{X}_i}{h_x})\frac{1}{h_a^2}\dot K\big\{\frac{g({\beta}^T{x})-A_i}{h_a}\big\}\dot g({\beta}^T{x}){x},\\
     \tilde {{ B}}_n({x};{\beta}) &= \frac{\partial}{\partial{\beta}}B_n({x};{\beta})= \frac{1}{n}\sum_{i=1}^n \frac{1}{h_x^q} {K_q}(\frac{{x}-{X}_i}{h_x})\frac{1}{h_a^2}\dot K\big\{\frac{g({\beta}^T{x})-A_i}{h_a}\big\}\dot g({\beta}^T{x}){x}.
\end{align*}

Since ${S}_n({\beta})$ is of the integration form, to calculate the limiting distribution of ${S}_n$, we can first calculate the limiting distribution of the part inside the integral for a fixed ${x}$. %$\sqrt{nh_x^q h_a^3} \tilde {{A}}_n({x};{\beta}^*)$ for a fixed ${x}$.

Let the parts inside the integral of $ S_n({\beta})$ and $ S({\beta})$ be:
\begin{gather*}
    {G}_n({x};{\beta})=\frac{\tilde {{A}}_n({x};{\beta}) B_n({x};{\beta})-A_n({x};{\beta})\tilde{ {B}}_n({x};{\beta})}{B_n({x};{\beta})^2}C_n({x}),\\
    {G}({x};{\beta})=\frac{\tilde{ {A}}({x};{\beta}) B({x};{\beta})-A({x};{\beta})\tilde {{B}}({x};{\beta})}{B^2({x};{\beta})}C({x}),
\end{gather*}
where 
$
    \tilde{{A}}({x};{\beta})=\Big[m\big\{{x},g({\beta}^T{x})\big\}f_a\big\{{x},g({\beta}^T{x})\big\}+m_a\big\{{x},g({\beta}^T{x})\big\}f_{X,A}\big\{{x},g({\beta}^T{x})\big\}\Big] \dot g({\beta}^T{x}){x}$ and \\
    $
    \tilde{{B}}({x};{\beta})=f_a\{{x},g\big({\beta}^T{x})\big\}$.

To prove the limit distribution of ${G}_n({\beta})-{G}({\beta})$, we need the following lemma:
%-------------------------------------------------------
\begin{lemma}\label{lemma3}
    If $\{A_n\}_{n=1}^{\infty}$ and $\{B_n\}_{n=1}^{\infty}$ are two sequences of random variables and $c_n(A_n-A)\to N(0,\Sigma _A)$ in distribution and $d_n(B_n-B)\to N(0, \Sigma_B)$ in distribution, where $c_n/d_n\to 0$ as $n\to \infty$. Then:
\begin{gather*}
    c_n(A_nB_n-AB)=c_n(A_n-A)B+o_p(1).
\end{gather*}
\end{lemma}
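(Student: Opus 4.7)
The plan is to prove Lemma \ref{lemma3} by an elementary algebraic decomposition followed by Slutsky-type arguments. The key identity is
\begin{equation*}
A_n B_n - AB = (A_n - A)B + A_n(B_n - B),
\end{equation*}
so multiplying by $c_n$ reduces the lemma to showing that the residual term $c_n A_n (B_n - B)$ is $o_p(1)$.

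First I would further decompose $A_n = A + (A_n - A)$ to write
\begin{equation*}
c_n A_n (B_n - B) = c_n A (B_n - B) + c_n (A_n - A)(B_n - B).
\end{equation*}
For the first piece, I would write $c_n A (B_n - B) = A\cdot (c_n/d_n)\cdot d_n(B_n - B)$; by assumption $d_n(B_n-B) = O_p(1)$ (as it converges to a normal limit) and $c_n/d_n \to 0$, so the product is $o_p(1)$. For the second piece, I would use that $c_n(A_n - A) = O_p(1)$ from its weak convergence, while $B_n - B = (d_n)^{-1}\cdot d_n(B_n-B) = o_p(1)\cdot O_p(1) = o_p(1)$, so $c_n(A_n-A)(B_n-B) = O_p(1)\cdot o_p(1) = o_p(1)$.

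Assembling the pieces gives $c_n(A_n B_n - AB) = c_n(A_n - A)B + o_p(1)$, as desired. No step is genuinely delicate: the whole argument is Slutsky's theorem applied after the standard product-rule decomposition. The only point requiring a little care is bookkeeping when $A$, $B$, $A_n$, $B_n$ are allowed to be vector- or matrix-valued (as they are when the lemma is invoked in the paper for objects like $\tilde A_n$ and $B_n$), since in that case the product $A_n B_n$ must be interpreted in a compatible sense; however, the componentwise argument is identical, and the rates $c_n(A_n-A) = O_p(1)$, $d_n(B_n-B) = O_p(1)$, and $c_n/d_n \to 0$ continue to drive each term to zero in probability.
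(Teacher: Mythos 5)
Your proposal is correct and follows essentially the same route as the paper: the same product-rule decomposition $A_nB_n-AB=(A_n-A)B+A_n(B_n-B)$, with the cross term killed by writing it as $(c_n/d_n)\cdot d_n(B_n-B)$ times a stochastically bounded factor. The only cosmetic difference is that you split $A_n(B_n-B)$ further into $A(B_n-B)+(A_n-A)(B_n-B)$, whereas the paper bounds $A_n$ directly via $A_n\xrightarrow{p}A$; both arguments are valid.
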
 
%-------------------------------------------------------
\begin{proof}
    Notice that:
$
    c_n(A_nB_n-AB)=(c_n/{d_n})A_n\{d_n(B_n-B)\}+c_n(A_n-A)B,
$
where $d_n(B_n-B)$ converges to a normal distribution, $A_n$ converge in probability to $A$ and $c_n/d_n\to 0$. Thus the first term is $o_p(1)$. Then we have $c_n(A_nB_n-AB)=c_n(A_n-A)B+o_p(1)$.
\end{proof}

Under the assumption of the boundedness of the first three derivatives of $m({x},a)$ and $f({x},a)$, we can prove that $E\big\{\tilde {{A}}_n({x};{\beta}) \big\} =\tilde {{A}}({x};{\beta})+{O}(h_x^2+h_a^2)$. Together with the law of large numbers, we obtain that $\tilde{{A}}_n({x};{\beta})\xrightarrow{p}\tilde{{A}}({x};{\beta})$.
Since $\tilde {{A}}_n({x};{\beta})$ is the sum of $n$ i.i.d variables, with the central limit theorem we can obtain that $(n h_x^q h_a^3)^{1/2}\{\tilde {{A}}_n({x};{\beta})-\tilde {{A}}({x};{\beta})\}$ converges to a normal distribution if $nh_x^qh_a^3 \text{Var}\{\tilde {{A}}_n({x};{\beta})\}$ converges to a constant covariance matrix. Notice now that, 
 \begin{align*}
    &\text{Var}\big\{\tilde {{A}}_n({x};{\beta})\big\}
    =\frac{1}{n} \text{Var}\Big[ Y_i \frac{1}{h_x^q}{K_q}(\frac{{x}-{X}_i}{h_x})\frac{1}{h_a^2}\dot K\big\{\frac{g({\beta}^T{x})-A_i}{h_a} \big\}\dot g({\beta}^T{x}){x} \Big]\\
    &=\frac{1}{n}\dot g^2({\beta}^T{x}){xx}^T\Bigg\{E \Big[Y_i^2 \frac{1}{h_x^{2q}}{K_q}^2\big(\frac{{x}-{X}_i}{h_x}\big)\frac{1}{h_a^4}\dot K^2\big\{\frac{g({\beta}^T{x})-A_i}{h_a}\big\} \Big]-E^2\big[\tilde{{A}}_n({x};{\beta})\big]\Bigg\} \\
    &=\frac{1}{n}\dot g^2({\beta}^T{x}){xx}^T E \Big[Y_i^2 \frac{1}{h_x^{2q}}{K_q}^2(\frac{{x}-{X}_i}{h_x})\frac{1}{h_a^4}\dot K^2\big\{\frac{g({\beta}^T{x})-A_i}{h_a}\big\} \Big] +{O}(\frac{1}{n}),
\end{align*}
where the expectation in the last equation can be calculated similarly as before:
\begin{align*}
     &E\Big[Y_i^2 \frac{1}{h_x^{2q}}{K_q}^2(\frac{{x}-{X}_i}{h_x})\frac{1}{h_a^4}\dot K^2\big\{\frac{g({\beta}^T{x})-A_i}{h_a}\big\} \Big]\\
     &=\frac{1}{h_x^q h_a^3}\Big[m_2\big\{{x},g({\beta}^T{x})\big\}f_{X,A}\big\{{x},g({\beta}^T{x})\big\}{\kappa}_{0,2}\dot \kappa_{0,2}+O(h_x^2)+O(h_a^2)+O(h_xh_a)\Big].
\end{align*}
Thus,
$$
    nh_x^q h_a^3\text{Var}\big\{\tilde {{A}}_n({x};{\beta})\big\}=\dot g^2({\beta}^T{x}){xx}^T m_2\big\{{x},g({\beta}^T{x})\big\}f_{X,A}\big\{{x},g({\beta}^T{x})\big\}{\kappa}_{0,2}\dot \kappa_{0,2}+{O}(h_x^qh_a^3).
$$
Therefore, for $nh_x^q h_a^3\to \infty$ as $n\to \infty$, we have:
\begin{gather*}
   (n h_x^q h_a^3)^{1/2}\big\{\tilde {{A}}_n({x};{\beta})-\tilde {{A}}({x};{\beta})\big\}\to\\
   N\Big[0,g^2({\beta}^T{x}){xx}^T m_2\big\{{x},g({\beta}^T{x})\big\} f_{X,A}\big\{{x},g({\beta}^T{x})\big\}{\kappa}_{0,2}\dot \kappa_{0,2}\Big]
\end{gather*}
in distribution as $n\to\infty$.

Similarly, we can obtain that, as $n\to\infty$, 
$$(n h_x^q h_a^3)^{1/2}\big\{\tilde {{B}}_n({x};{\beta})- \tilde {{B}}({x};{\beta})\big\}\xrightarrow{d}N\Big[0, \dot g^2({\beta}^T{x}){xx}^T f_{X,A}\big\{{x},g({\beta}^T{x})\big\}{\kappa}_{0,2}\dot \kappa_{0,2}\Big],$$

$$(n h_x^q h_a^3)^{1/2}\big\{A_n({x};{\beta})-A({x};{\beta})\big\}\xrightarrow{d}N\Big[0,m_2\big\{{x},g({\beta}^T{x})\big\} f_{X,A}\big\{{x},g({\beta}^T{x})\big\} {\kappa}_{0,2}\tilde \kappa_{0,2}\Big],$$ 
$$(n h_x^q h_a^3)^{1/2}\big\{B_n({x};{\beta})-B({x};{\beta})\big\}\xrightarrow{d}N\Big[0,f_{X,A}\big\{{x},g({\beta}^T{x})\big\} {\kappa}_{0,2}\tilde \kappa_{0,2}\Big],$$ where $\tilde \kappa_{0,2}=\int K^2(s)ds $. 

By Lemma \ref{lemma3} and the above convergence results, we obtain that:
\begin{align*}
    &(n h_x^q h_a^3)^{1/2}\big\{\tilde {{A}}_n({x};{\beta})B_n({x};{\beta})-\tilde {{B}}_n({x};{\beta})A_n({x};{\beta}) \big\}\\
    &=(n h_x^q h_a^3)^{1/2}\big\{\tilde {{A}}_n({x};{\beta})B({x};{\beta})-\tilde {{B}}_n({x};{\beta})A({x};{\beta})\big\}+{o}_p(1)\\
    &=(n h_x^q h_a^3)^{1/2}\big\{\frac{1}{n}\sum_{i=1}^n{\Phi}_i({x};{\beta})\big\} +{o}_p(1),
\end{align*}
where
$$
    {\Phi}_i({x};{\beta})=\big\{Y_iB({x};{\beta})-A({x};{\beta})\big\} \frac{1}{h_x^q}{K_q}(\frac{{x}-{X}_i}{h_x})\frac{1}{h_a^2}\dot K\big\{\frac{g({\beta}^T{x})-A_i}{h_a}\big\}\dot g({\beta}^T{x}){x}.
$$ Similar to previous calculations, by the central limit theorem we can prove that \\ $(n h_x^q h_a^3)^{1/2}$ $\{\sum_{i=1}^n{\Phi}_i({x};{\beta})/n\}$ converge to a normal distribution, where the covariance of the asymptotic distribution is ${\Sigma}_{{\Phi}}({x};{\beta})$:
$$
    {\Sigma_{\Phi}}({x};{\beta})=\dot g^2({\beta}^T{x}){xx}^T {\kappa}_{0,2}\dot \kappa_{0,2} \Big[m_2\big\{{x},g({x}'{\beta})\big\}-m^2\big\{{x},g({x}'{\beta})\big\}\Big]f^3_{X,A}\big\{{x},g({x}'{\beta})\big\}.
$$

Notice that
\begin{gather*}
(n h_x^q h_a^3)^{1/2}{G}_n({x};{\beta})
=(n h_x^q h_a^3)^{1/2}  \Big\{\frac{1}{n}\sum_{i=1}^n{\Phi}_i({x};{\beta})\Big\}\frac{C_n({x})}{B_n^2({x};{\beta})}+{o}_p(1).
\end{gather*}
Together with $C_n({x})\xrightarrow{p}C({x})$, $B_n({x};{\beta})\xrightarrow{p}B({x};{\beta})$, and Slusky's theorem, we now obtain that: 
\begin{gather*}
    (n h_x^q h_a^3)^{1/2}\{{G}_n({x};{\beta})-{G}({x};{\beta})\}\to N\big\{{0},{\Sigma_G}({x};{\beta})\big\},
\end{gather*}
where:
$$
   % &=\dot g^2({\beta}^T{x}){xx}^T {\kappa}_{0,2}\dot \kappa_{0,2} C^2({x}) \frac{m_2\big({x},g({\beta}^T{x})\big)B^2({x};{\beta})-A^2({x};{\beta})}{B^3({x};{\beta})}\\
    {\Sigma_G}({x};{\beta})=\dot g^2({\beta}^T{x}){xx}^T {\kappa}_{0,2}\dot \kappa_{0,2} f^2_X({x})\frac{m_2\big\{{x},g({x}'{\beta})\big\}-m^2\big\{{x},g({x}'{\beta})\big\}}{f_{X,A}\big\{{x},g({x}'{\beta})\big\}}.
$$
Now let us calculate the covariance of ${S}_n({x};{\beta})$.
By the tightness of ${G}_n({x};{\beta})$ and ${G}({x};{\beta})$, \\ $(n h_x^q h_a^3)^{1/2}\{{G}_n({x};{\beta})-{G}({x};{\beta})\}$ converges weakly to a Gaussian process ${\mathcal G}({x})$with mean 0 and covariance function  ${\Sigma_\mathcal G}({\beta})$, where ${\Sigma}_{{\mathcal G}}({x}_1,{x}_2;{\beta})$ is the limit of 
\begin{align*}
    &\text{Cov}\big\{(n h_x^q h_a^3)^{1/2}{G}_n({x}_1;{\beta}),(n h_x^q h_a^3)^{1/2}{G}_n({x}_2;{\beta})\big\}\\
    &=\frac{h_x^qh_a^3}{n}\sum_{i=1}^n\sum_{j=1}^n \text{Cov}\big\{{\Phi}_i({x}_1;{\beta}),{\Phi}_j({x}_2;{\beta})\big\}\frac{C({x}_1)C({x}_2)}{B^2({x}_1;{\beta})B^2({x}_2;{\beta})}+{o}_p(1)\\
    &=\frac{h_x^qh_a^3}{n}\sum_{i=1}^n \text{Cov}\big\{{\Phi}_i({x}_1;{\beta}),{\Phi}_i({x}_2;{\beta})\big\}\frac{C({x}_1)C({x}_2)}{B^2({x}_1;{\beta})B^2({x}_2;{\beta})}+{o}_p(1)\\
    &=h_x^qh_a^3\text{Cov}\big\{{\Phi}_1({x}_1;{\beta}),{\Phi}_1({x}_2;{\beta})\big\}\frac{C({x}_1)C({x}_2)}{B^2({x}_1;{\beta})B^2({x}_2;{\beta})}+{o}_p(1)\\
    &={T}({x}_1,{x}_2;{\beta})\big\{ \int_{{u}} {K_q}({u}){K_q}({u}+\frac{{x}_2-{x}_1}{h_x}) d{u}\big\} \Big[\int_v\dot K(v) \dot K\big\{v+\frac{g({\beta}^T{x}_1)-g({\beta}^T{x}_2)}{h_a}\big\}dv\Big]
    \\&+{o}_p(1),
\end{align*}
for ${x}_1, {x}_2\in \mathcal X$, where 
\begin{align*}
    &{T}({x}_1,{x}_2;{\beta})=\\
    &\Big[ m_2\big\{{x}_1,g({\beta}^T{x}_1)\big\}B({x}_1;{\beta})B({x}_2;{\beta})-m\big\{{x}_1,g({\beta}^T{x}_1)\big\}  \big\{  A({x}_2;{\beta})B({x}_1;{\beta})+
    \\&A({x}_1;{\beta})B({x}_2;{\beta})  \big\}  +A({x}_1;{\beta})A({x}_2;{\beta})\Big]\dot g({\beta}^T{x}_1)\dot g({\beta}^T{x}_2){x}_1{x}_2^T\frac{C({x}_1)C({x}_2)}{B^2({x}_1;{\beta})B^2({x}_2;{\beta})}.
\end{align*}

When ${x}_1\neq {x}_2$, $\int_{{u}} {K_q}({u})K\{{u}+({x}_2-{x}_1)/{h_x}\} d{u}$ and $\int_v\dot K(v) \dot K[v+$ $\{g({\beta}^T{x}_1)-$ $g({\beta}^T{x}_2)\}$ $/{h_a}]dv$ will converge to 0 as $h_x,h_a\to 0$. Thus ${\Sigma_\mathcal G}({x}_1,{x}_2;{\beta})={0}$ for ${x}_1\neq {x}_2$.
Therefore, 
\begin{equation*}
    (n h_x^q h_a^3)^{1/2} \big\{{S}_n({\beta}^*)-{S}({\beta}^*)\big\} = \int_{{x}}(n h_x^q h_a^3)^{1/2} \big\{{G}_n({x};{\beta}^*)-{G}({x};{\beta}^*)\big\}dx \to N\big\{{0},{\Sigma_S}({\beta}^*)\big\},
\end{equation*}
in distribution, where $
    {\Sigma_S}({\beta})=\int_{{x}_1}\int_{{x}_2} {\Sigma_{\mathcal G}}({x}_1,{x}_2;{\beta})d{x}_1d{x}_2=\int_{{x}} {\Sigma_G}({x};{\beta})d{x}.$

%--------------------------------------
\subsubsection{Proof of Equation (\ref{D})}
First, write ${D}_n({\beta})$ as:
\begin{align*}
    {D}_n({\beta})
    &=\frac{\partial^2}{\partial {\beta}\partial {\beta}^T}M_n({\beta})\\
    &=\int_x \Big\{\frac{\frac{\partial^2}{\partial{\beta}\partial{\beta}^T}A_n({x};{\beta})}{B_n({x};{\beta})}
    -2\frac{\frac{\partial}{\partial{\beta}}A_n({x};{\beta})\frac{\partial}{\partial{\beta}}B_n({x};{\beta})}{B_n^2({x};{\beta})}
    -\frac{A_n({x};{\beta})\frac{\partial^2}{\partial{\beta}\partial{\beta}^T}B_n({x};{\beta})}{B_n({x};{\beta})^2}\\
    &+2\frac{A_n({x};{\beta})\frac{\partial}{\partial{\beta}}B_n({x};{\beta})\frac{\partial}{\partial{\beta}^T}B_n({x};{\beta})}{B_n({x};{\beta})^3}\Big\} C_n({x}) d{x}.
\end{align*}
Similar to previous calculations, under the assumption of boundedness of the first three order derivatives of $m({x},a)$ and $f_{X,A}({x},a)$, we obtain that  $\partial^2 A_n({x};{\beta})/(\partial {\beta}\partial{\beta}^T)$ converges in probability to:

\begin{align*}
    &2\Big[{m}_{aa}\big\{{x},g({\beta}^T{x})\big\}f_{X,A}\big({x},g({\beta}^T{x})\big)+
    {m}_a\big\{{x},g({\beta}^T{x})\big\}{f}_a\big\{{x},g({\beta}^T{x})\big\}+\\&
    m\big\{{x},g({\beta}^T{x})\big\}{f}_{aa}\big\{{x},g({\beta}^T{x})\big\}\Big]\dot g^2({\beta}^T{x}){xx}^T\\
    &
    + \Big[{m}_a\big\{{x};g({\beta}^T{x})\big\} f_{X,A}\big\{{x};g({\beta}^T{x})\big\} + 
    m\big\{{x},g({\beta}^T{x})\big\} {f}_a\big\{{x};g({\beta}^T{x})\big\}\Big]\ddot g({\beta}^T{x}){xx}^T,
\end{align*}
and $\partial^2 B_n({x};{\beta})/ ({\partial {\beta}\partial{\beta}^T} )$ converges in probability to 
\begin{align*}
    2 {f}_{aa}\big\{{x},g({\beta}^T{x})\big\}\dot g^2({\beta}^T{x}){xx}^T+
    {f}_a\big\{{x},g({\beta}^T{x})\big\}\ddot g({\beta}^T{x}){xx}^T.
\end{align*}

Together with the previous convergence results for $\tilde{{A}}_n({x};{\beta})$, $\tilde{{B}}_n({x};{\beta})$, $A_n({x};{\beta})$, $B_n({x};{\beta})$, $C_n({x})$, we obtain that $D_n(\beta)$ converge in probability to
\begin{gather*}
    \int_x \Big[ {m}_{aa} \big\{{x},g({\beta}^T{x})\big\}\dot g^2({\beta}^T{x}) +{m}_a\big\{{x},g({\beta}^T{x})\big\}\ddot g({\beta}^T{x})\Big]f_X({x}) {x x}^T d{x}={D}({\beta}).
\end{gather*}

%-------------------------------
\subsubsection{Proof of Equation (\ref{residual})}
For notation, let $x_l$ be the $l$ th component of the vector $\textbf x$, and $\beta_j$ and $\beta_k$ are the $j$ th and $k$ th component of vector ${\beta}$, $j,k,l\in \{1,...d\}$. Let $S_{n,l}({\beta})$ be the $l$ th component of vector ${S}_n({\beta})$. Since we have proved that $\hat{{\beta}}_n-{\beta}^*$ converge in probability to $0$, to prove Equation (\ref{residual}) it suffices to show that:
$
        \partial^2{S}_{n,l}({\beta})/({\partial{\beta} \partial{\beta}}^T)={O}_p(1),
$
By calculation, for $j,k,l\in \{1,...,d\}$: %${\beta}\in \Theta$:
\begin{align*}
        &\frac{\partial^2}{\partial\beta_j\partial\beta_k}S_{n,l}({\beta})\\&
        =\int_{{x}}  \Bigg\{  \frac{\frac{\partial^2}{\partial\beta_j\partial\beta_k}\tilde A_{n,l}({x};{\beta})}{B_n({x};{\beta})}
        -3\frac{\tilde B_{n,k}({x};{\beta})\frac{\partial}{\partial\beta_j}\tilde A_{n,l}({x};{\beta})}{B_n^2({x};{\beta})}
        -3\frac{\tilde A_{n,l}({x};{\beta})\frac{\partial}{\partial \beta_k}\tilde B_{n,j}({x};{\beta})}{B_n^2({x};{\beta})}\\&
        -\frac{A_n({x};{\beta})\frac{\partial^2}{\partial\beta_j\partial\beta_k}\tilde B_{n,l}({x};{\beta})}{B_n^2}
        +6\frac{\tilde A_{n,l}({x};{\beta})\tilde B_{n,j}({x};{\beta}) \tilde B_{n,k}({x};{\beta})}{B_n^3}\\&
        +2\frac{A_n({x};{\beta})\big(\frac{\partial\tilde B_{n,l}}{\partial \beta_k}\tilde B_{n,j} + \frac{\partial\tilde B_{n,j}}{\partial \beta_k}\tilde B_{n,l} +\frac{\partial\tilde B_{n,l}}{\partial \beta_j}\tilde B_{n,k}\big)}{B_n^3}\\&
        -6\frac{A_n({x};{\beta})\tilde B_{n,l}({x};{\beta})\tilde B_{n,j}({x};{\beta}) \tilde B_{n,k}({x};{\beta})}{B_n^4({x};{\beta})} \Bigg\}
        C_n({x})d{x},
    \end{align*}
where $\tilde A_{n,j}({x};{\beta})$ is the $j$ th component of vector $\tilde {{A}}_n({x};{\beta})$ and $\tilde {{B}}_{n,j}({x};{\beta})$ is the $j$ th component of vector $\tilde B_n({x};{\beta})$. With similar calculation as before, under the assumption that the first four orders of derivatives of $m({x},a)$ and $f_{X,A}({x},a)$ are bounded, we obtain that: $\partial^2S_{n,l}({\beta})/({\partial\beta_j\partial\beta_k})={O}_p(1)$. Thus $
        \partial^2 {S}_{n,l}({\beta})/({\partial{\beta} \partial{\beta}}^T)={O}_p(1).
$
%---------------------------------
\section{Estimation of Covariance} 
%---------------------------------
From above, the covariance of the asymptotic distribution for $(n h_x^q h_a^3)^{1/2}(\hat{{\beta}}_n-{\beta}^*)$ is given by: $${D}({\beta}^*)^{-1}{\Sigma}_{{S}}({\beta}^*){D}({\beta}^*)^{-1}.$$ First, ${D}({\beta})$ can be estimated with ${D}_n({{\beta}})$. Then for the estimation of ${\Sigma}_{{S}}({\beta})$, notice that $
    {\Sigma_S}({\beta})=\int_{{x}_1}\int_{{x}_2} {\Sigma_{\mathcal G}}({x}_1,{x}_2;{\beta})d{x}_1d{x}_2=\int_{{x}} {\Sigma_G}({x};{\beta})d{x}$, and 
    $$
    (n h_x^q h_a^3)^{1/2}{G}_n({x};{\beta})
=(n h_x^q h_a^3)^{1/2}  \Big\{\frac{1}{n}\sum_{i=1}^n{\Phi}_i({x};{\beta})\Big\}\frac{C_n({x})}{B_n^2({x};{\beta})}+{o}_p(1).
$$ 
Therefore, ${\Sigma}_{{G}}({x;\beta})={\Sigma}_\Phi(x;\beta)C_n^2(x)/B_n^4(x;\beta)$ where $\Sigma_{\Phi} (x;\beta)$ can be estimated empirically by the sample covariance of  ${\Phi}_i({x};{\beta})$. ${\Phi}_i({x};{\beta})$ is approximated by plugging in the $A_n({x};{\beta})$, $B_n({x};{\beta})$, $C_n({x})$ for $A({x};{\beta})$, $B({x};{\beta})$ and $C({x})$. Finally, we plug in $\hat{{\beta}}_n$ for ${\beta}^*$ to obtain the estimated covariance.
\end{document}